\DeclareMathOperator{\tww}{tww}
\renewcommand{\phi}{\varphi}
\newcommand{\cyrsffamily}{\fontencoding{OT2}\fontfamily{cmss}\selectfont}
\DeclareTextFontCommand{\textcyrsf}{\cyrsffamily}
\DeclareMathOperator{\sh}{\mathsf{Reduct}}
\DeclareMathOperator{\bl}{bf}
\crefname{thm}{Theorem}{Theorems}
\crefname{thmC}{Theorem}{Theorems}
\crefname{lem}{Lemma}{Lemmas}
\newtheorem*{thm*}{Theorem}
\newtheoremstyle{claim}
{6pt}
{6pt}
{\normalfont}
{0pt}
{\bfseries}
{{\bfseries .}}
{5pt plus 1pt minus 1pt}
{$\vartriangleright$ \thmname{#1}\thmnumber{ #2}\textnormal{\thmnote{ (#3)}}}
\theoremstyle{claim}
\newtheorem{claim}[thm]{Claim}
\newenvironment{claimproof}{\begin{proof}[Proof of the claim.]}{\end{proof}}
\newenvironment{absolutelynopagebreak}
{\par\nobreak\vfil\penalty0\vfilneg
	\vtop\bgroup}
{\par\xdef\tpd{\the\prevdepth}\egroup
	\prevdepth=\tpd}
		\crefname{thm}{thm}{Theorems}
		\crefname{defi}{Definition}{Definitions}
		\crefname{lem}{lem}{Lemmas}
		\crefname{cor}{Corollary}{Corollaries}
		\crefname{conj}{Conjecture}{Conjectures}
		\crefname{exa}{Example}{Examples}
		\crefname{prob}{Problem}{Problems}
		\crefname{claim}{Claim}{Claims}
		\crefname{prop}{Proposition}{Propositions}
		\crefname{fact}{Fact}{Facts}
\title{Twin-width and permutations} 
\thanks{\hspace*{-12pt}\begin{minipage}[c]{0.8\textwidth}\hspace*{12pt}This paper is part of a project that has received funding from the European Research Council (ERC) under the European Union's Horizon 2020 research and innovation program (grant agreement No 810115 -- {\sc Dynasnet}) and from the German Research Foundation (DFG) with grant agreement
No 444419611.\end{minipage}\hfill
\begin{minipage}[c]{.16\textwidth}
	\includegraphics[height=1cm]{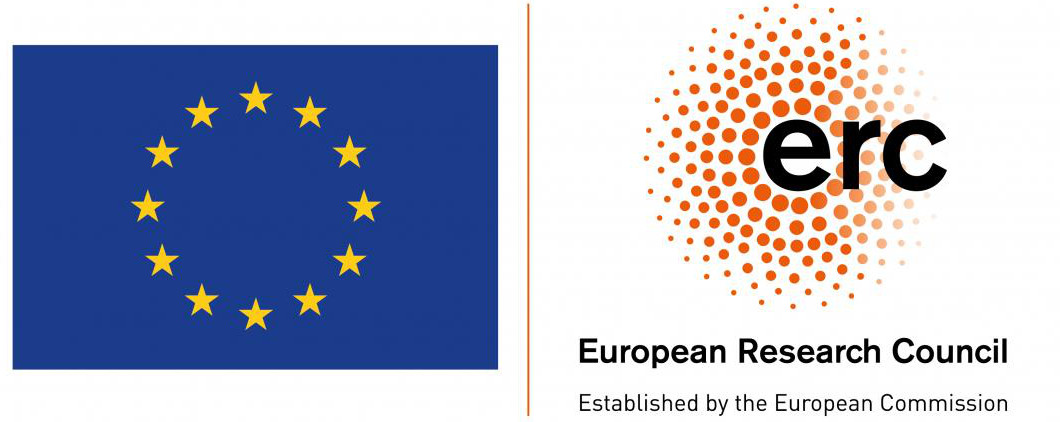}\\
	\includegraphics[height=1cm]{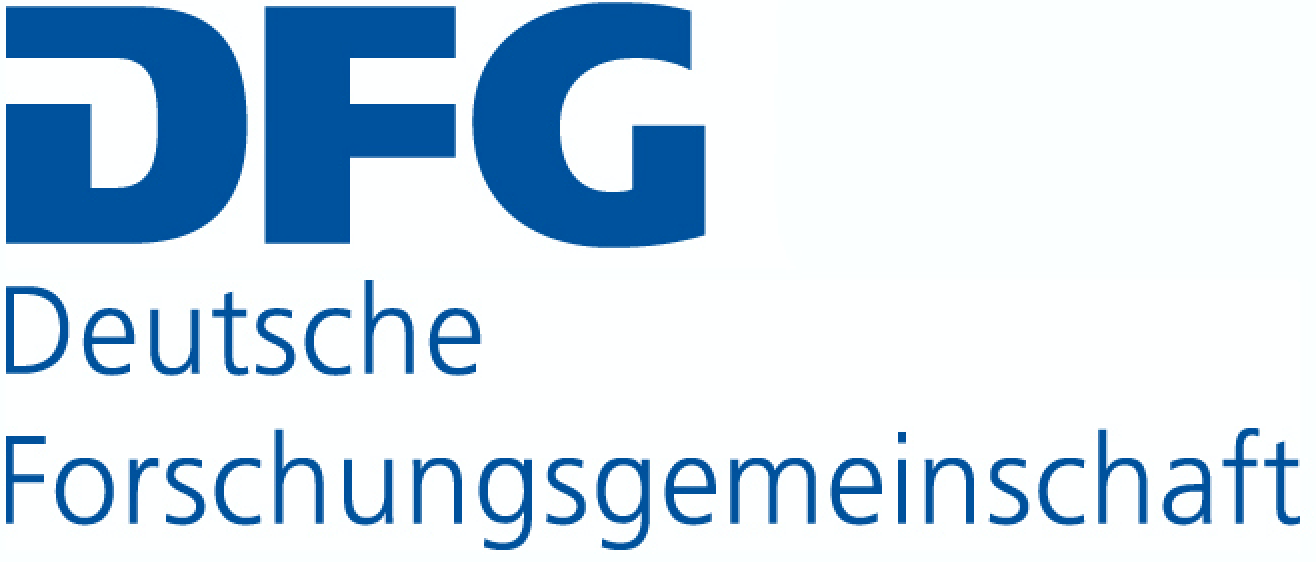}
\end{minipage}{\color{white}.}}
\author[\'E.~Bonnet]{\'Edouard Bonnet\lmcsorcid{0000-0002-1653-5822}}[a]
\author[J.~Ne\v set\v ril]{Jaroslav Ne\v set\v ril\lmcsorcid{0000-0002-5133-5586}}[b]
\author[P.~Ossona de Mendez]{Patrice {Ossona de Mendez}\lmcsorcid{0000-0003-0724-3729}}[c,b]
\author[S.~Siebertz]{Sebastian Siebertz\lmcsorcid{0000-0002-6347-1198}}[d]
\author[S.~Thomass{\'e}]{St\'ephan Thomass{\'e}\lmcsorcid{0000-0002-7090-1790}}[a]
\address{Universit\'e de  Lyon, CNRS, ENS de Lyon, France}
\email{edouard.bonnet@ens-lyon.fr, stephan.thomasse@ens-lyon.fr}
\address{Computer Science Institute of Charles University (IUUK), Praha, Czech Republic}
\email{nesetril@iuuk.mff.cuni.cz}
\address{Centre d'Analyse et de Math\'ematiques Sociales, EHESS, CNRS UMR 8557, Paris, France}
\email{pom@ehess.fr}
\address{University of Bremen, Bremen, Germany}
\email{siebertz@uni-bremen.de}
\keywords{Twin-width, first-order transductions, structural graph theory} 
\begin{document}
	\begin{abstract}
		Inspired by a width invariant on permutations  defined by Guillemot and
		Marx, Bonnet, Kim, Thomass\'e, and Watrigant introduced the  twin-width of graphs, which 
		is a parameter describing its structural complexity.
		This invariant has been further extended to binary structures, in several (basically equivalent) ways.
		We prove that a class of binary relational structures (that is: edge-colored partially directed graphs) has bounded twin-width if and only if it is a first-order transduction of a~proper permutation class.
		As a by-product, we show that every class with bounded twin-width contains at most $2^{O(n)}$ pairwise non-isomorphic $n$-vertex graphs.
	\end{abstract}
	
	\maketitle
	

\section{Introduction}

In this paper we consider the graph parameter \emph{twin-width}, defined by
Bonnet, Kim, Thomass\'e and Watrigant~\cite{twin-width1}
as a generalization of an invariant for classes of permutations
defined by Guillemot and Marx~\cite{Guillemot14}. Twin-width was
recently studied intensively in the context of many structural and
algorithmic questions, such as FPT model checking~\cite{twin-width1},
graph enumeration~\cite{twin-width2}, graph
coloring~\cite{twin-width3}, and structural properties of matrices and ordered
graphs~\cite{twin-width4}.  

Many well-studied classes of graphs have bounded twin-width: planar graphs, and more generally, any class of graphs excluding a fixed minor, cographs, and more generally, any class of bounded clique-width, etc.

The twin-width of graphs was originally defined using a sequence of `near-twin' vertex contractions or identifications.
Roughly speaking, twin-width measures the accumulated error (recorded via the so-called `red edges') made by the identifications.
To help the reader start forming intuitions, we give a concise  definition of the twin-width of a graph; a formal generalization for binary structures is presented in~\Cref{sec:tww}.

A~\emph{trigraph} is a graph with some edges colored red (while the rest of them are black).
A~\emph{contraction} (or identification) consists of merging two (non-necessarily adjacent) vertices, say, $u, v$ into a vertex $w$ that is adjacent to a vertex $z$ via a black edge if $uz$ and $vz$ were black edges, or otherwise, via a red edge if at least one of $u$ and $v$ were adjacent to $z$. The rest of the trigraph does not change.
A~\emph{contraction sequence} of an $n$-vertex graph $G$ is a sequence of trigraphs $G=G_n, \ldots, G_1$ such that $G_i$ is obtained from $G_{i+1}$ by performing one contraction (observe that $G_1$ is the 1-vertex graph).
A~\emph{$d$-sequence} is a contraction sequence where all the trigraphs have red degree at most~$d$.
The~\emph{twin-width} of $G$ is then the minimum integer~$d$ such that $G$ admits a $d$-sequence.
See~\Cref{fig:contraction-sequence} for an example of a graph admitting a 2-sequence.

\begin{figure}[h!t]
	\centering
	\resizebox{400pt}{!}{
		\begin{tikzpicture}[
			vertex/.style={circle, draw, minimum size=0.68cm}
			]
			\def\s{1.2}
			\foreach \i/\j/\l in {0/0/a,0/1/b,0/2/c,1/0/d,1/1/e,1/2/f,2/1/g}{
				\node[vertex] (\l) at (\i * \s,\j * \s) {$\l$} ;
			}
			\foreach \i/\j in {a/b,a/d,a/f,b/c,b/d,b/e,b/f,c/e,c/f,d/e,d/g,e/g,f/g}{
				\draw (\i) -- (\j) ;
			}
			
			\begin{scope}[xshift=3 * \s cm]
				\foreach \i/\j/\l in {0/0/a,0/1/b,0/2/c,1/0/d,2/1/g}{
					\node[vertex] (\l) at (\i * \s,\j * \s) {$\l$} ;
				}
				\foreach \i/\j/\l in {1/1/e,1/2/f}{
					\node[vertex,opacity=0.2] (\l) at (\i * \s,\j * \s) {$\l$} ;
				}
				\node[draw,rounded corners,inner sep=0.01cm,fit=(e) (f)] (ef) {ef} ;
				\foreach \i/\j in {a/b,a/d,b/c,b/d,b/ef,c/ef,c/ef,d/g,ef/g,ef/g}{
					\draw (\i) -- (\j) ;
				}
				\foreach \i/\j in {a/ef,d/ef}{
					\draw[red, very thick] (\i) -- (\j) ;
				}
			\end{scope}
			
			\begin{scope}[xshift=6 * \s cm]
				\foreach \i/\j/\l in {0/1/b,0/2/c,2/1/g,1/1/ef}{
					\node[vertex] (\l) at (\i * \s,\j * \s) {$\l$} ;
				}
				\foreach \i/\j/\l in {0/0/a,1/0/d}{
					\node[vertex,opacity=0.2] (\l) at (\i * \s,\j * \s) {$\l$} ;
				}
				\draw[opacity=0.2] (a) -- (d) ;
				\node[draw,rounded corners,inner sep=0.01cm,fit=(a) (d)] (ad) {ad} ;
				\foreach \i/\j in {ad/b,b/c,b/ad,b/ef,c/ef,c/ef,ef/g,ef/g}{
					\draw (\i) -- (\j) ;
				}
				\foreach \i/\j in {ad/ef,ad/g}{
					\draw[red, very thick] (\i) -- (\j) ;
				}
			\end{scope}
			
			\begin{scope}[xshift=9 * \s cm]
				\foreach \i/\j/\l in {0/2/c,2/1/g,0.5/0/ad}{
					\node[vertex] (\l) at (\i * \s,\j * \s) {$\l$} ;
				}
				\foreach \i/\j/\l in {0/1/b,1/1/ef}{
					\node[vertex,opacity=0.2] (\l) at (\i * \s,\j * \s) {$\l$} ;
				}
				\draw[opacity=0.2] (b) -- (ef) ;
				\node[draw,rounded corners,inner sep=0.01cm,fit=(b) (ef)] (bef) {bef} ;
				\foreach \i/\j in {ad/bef,bef/c,bef/ad,c/bef,c/bef,bef/g}{
					\draw (\i) -- (\j) ;
				}
				\foreach \i/\j in {ad/bef,ad/g,bef/g}{
					\draw[red, very thick] (\i) -- (\j) ;
				}
			\end{scope}
			
			\begin{scope}[xshift=11.7 * \s cm]
				\foreach \i/\j/\l in {0/2/c}{
					\node[vertex] (\l) at (\i * \s,\j * \s) {$\l$} ;
				}
				\foreach \i/\j/\l in {0.5/0/adg,0.5/1.1/bef}{
					\node[vertex] (\l) at (\i * \s,\j * \s) {\footnotesize{\l}} ;
				}
				\foreach \i/\j in {c/bef}{
					\draw (\i) -- (\j) ;
				}
				\foreach \i/\j in {adg/bef}{
					\draw[red, very thick] (\i) -- (\j) ;
				}
			\end{scope}
			
			\begin{scope}[xshift=13.7 * \s cm]
				\foreach \i/\j/\l in {0.5/0/adg,0.5/1.1/bcef}{
					\node[vertex] (\l) at (\i * \s,\j * \s) {\footnotesize{\l}} ;
				}
				\foreach \i/\j in {adg/bcef}{
					\draw[red, very thick] (\i) -- (\j) ;
				}
			\end{scope}
			
			\begin{scope}[xshift=15 * \s cm]
				\foreach \i/\j/\l in {1/0.75/abcdefg}{
					\node[vertex] (\l) at (\i * \s,\j * \s) {\tiny{\l}} ;
				}
			\end{scope}
			
		\end{tikzpicture}
	}
	\caption{A 2-sequence witnessing that the initial graph has twin-width at most~2.}
	\label{fig:contraction-sequence}
\end{figure}

In this paper, the extension of twin-width for binary relational structures perfectly matches the one in~\cite{twin-width1} on undirected graphs, but will slightly differ for general binary structures.
Though, as we will observe, both definitions give parameters that differ only by, at most, a linear factor. 

We show that twin-width can be concisely expressed by special structures, which we call \emph{twin-models}.
Twin-models are rooted trees augmented by a set of transversal edges that
satisfies two simple properties: minimality and consistency. These
properties imply that every twin-model admits a \emph{ranking}, from which we
can compute a \emph{width}. The twin-width of a structure then coincides with the
optimal width of a ranked twin-model of the structure.  While this
connection is technical, twin-models provide a simple way to handle
classes of binary structures with bounded twin-width.
Note that an informal precursor of ranked twin-models appears in~\cite{twin-width3} in the form of the so-called \emph{ordered union trees} and the realization that the edge set of graphs of twin-width at most~$d$ can be partitioned into~$O_d(n)$ bicliques where both sides of each biclique are a discrete interval along a unique fixed vertex ordering.
The main novelty in the (ranked) twin-models lies in the axiomatization of \emph{legal} sets of transversal edges, which is indispensable to their logical treatment.

\smallskip
This paper is a combination of model-theoretic tools (relational
structures, interpretations, transductions), structural graph theory
and theory of permutations. Here, by a permutation, we mean a relational structure consisting of two linear orders on the same set (see \cite{albert2020two} for a discussion on representations of permutations). Note that this type of representation is particularly adapted to the study of patterns in permutations. 
The following is the main result of this paper:

\vspace{-2mm}

\begin{thm*}
	A class of binary relational structures has bounded twin-width if and only if it is a first-order transduction of a proper permutation class.
\end{thm*}

The ``only if'' part of this theorem is stated in more technical terms in \Cref{sec:perm} as \Cref{thm:main}, and is our main contribution.
The other direction, the fact that every binary structure that is a~first-order transduction of a~proper permutation class has bounded twin-width, was already known~\cite{twin-width1}.
More specifically, it was shown that proper permutation classes have bounded twin-width~\cite[Section 6.1]{twin-width1} and that every first-order transduction of a~class of bounded twin-width has itself bounded twin-width~\cite[Theorem 8.1]{twin-width1}. 

We recall that a proper permutation class is a set of permutations closed under sub-permutations that excludes at least one permutation. 
Transductions provide a model theoretical tool to encode relational
structures (or classes of relational structures) inside other 
(classes of) relational structures and will be formally defined in \Cref{sec:transductions}. 

The fact that any class of graphs with
bounded twin-width is just a transduction of a very simple class (a proper permutation
class) is surprising at first glance, and it nicely complements another
model theoretic characterization of classes of bounded twin-width: a class of graphs has bounded twin-width if and only if it is the reduct of a  dependent class of ordered graphs \cite{twin-width4}.
It can also be thought of as scaling up the fact that classes of bounded rank-width coincide with transductions of tree orders, and classes of bounded linear rank-width, with transductions of linear orders~\cite{Colcombet07}.
On the other hand,
twin-models are interesting objects per se and in a way present one of the
most permissive forms of width parameters related to trees. 
Note that for other classes of sparse structures we do not have such concrete models.

The main result implies that every relational structure on~$n$ elements
from a class with bounded twin-width can be encoded in a permutation on at most 
$kn$ elements for some number $k$. 
It is then a consequence of~\cite{MarcusT04} that
every class of relational structures with bounded twin-width contains at
most $c^n$ non-isomorphic structures with $n$ vertices, hence is small
(i.e., contains at most $c^n\,n!$ labeled structures with $n$
elements).
This extends the main result of~\cite{twin-width2} while not using the ``versatile twin-width'' machinery  (but only the preservation of bounded twin-width by transductions proved in~\cite{twin-width1}). 
This also extends a similar property for proper minor-closed
classes of graphs, which can be derived from the boundedness of book thickness, as noticed by McDiarmid (see the concluding remarks of \cite{BERNARDI2010468}). 

The proof of our main result is surprisingly complex and proceeds in several steps, which perhaps add new aspects to the rich spectrum of structures related to twin-width. The basic  steps can be outlined as follows (the relevant terminology will be formally introduced in the appropriate sections).

We start with a class $\mathscr C_0$ of binary relational structures
with bounded twin-width. We derive a class
$\mathscr T$ of twin-models (tree-like representations of the structures using rooted binary trees and transversal binary relations). 
Replacing the rooted binary trees of
the twin-models by binary tree orders, we get a class $\mathscr F$
of so-called full twin-models, which we prove has bounded twin-width. This class can
be used to retrieve $\mathscr C_0$ as a transduction, that is by means of a logical encoding.  Using a
transduction pairing (generalizing the notion of a bijective encoding)
between binary tree orders~$\mathscr O$ and
rooted binary trees ordered by a preorder~$\mathscr Y^<$ we
derive a transduction pairing of the class of full twin-models $\mathscr F$ with a class~$\mathscr T^<$ of ordered twin-models. From the property that the
class $\mathscr G$ of the Gaifman graphs of the twin-models in~$\mathscr T$ is degenerate (and has bounded twin-width), we prove a
transduction pairing of~$\mathscr T$ and $\mathscr G$, from
which we derive a transduction pairing of $\mathscr T^<$ and
the class~$\mathscr G^<$ of ordered Gaifman graphs of the ordered
twin-models. As a composition of a transduction pairing of $\mathscr G^<$ with a
class $\mathscr E^<$ of ordered binary structures, in which each binary
relation induces a pseudoforest and a transduction pairing of
$\mathscr E^<$ with a class~$\mathscr P$ of permutations we define a transduction pairing of $\mathscr G^<$ and $\mathscr P$. As $\mathscr G^<$ has bounded twin-width (as it is a transduction of a class
with bounded twin-width) we infer that $\mathscr P$ avoids at least one pattern. Following the backward transductions, we
eventually deduce that $\mathscr C_0$ is a transduction of
the hereditary closure $\overline{\mathscr P}$ of $\mathscr P$, which is a proper permutation class.

This proof may be schematically outlined by \Cref{fig:diag}. Here, all the notations are consistent with the notation used later in our proof.

\newcommand{\rdef}[1]{{\scriptsize (Def.~\ref{#1})}}
\newlength{\blen}
\setlength{\blen}{25pt}

\begin{figure}[h!t]
	\centering
	\begin{minipage}{\textwidth}\renewcommand{\thefootnote}{\alph{footnote}}
		\[
		\xy
		\xymatrix@C=.15\textwidth@R=50pt{
			&\makebox[\blen][c]{\parbox{35pt}{\centering $\mathscr{O}_0$\\\rdef{def:clOY}}}\ar@{->>}@/^/@<8pt>[r]^{\mathsf{L}}\ar@{}@<8pt>[r]|{\rightleftharpoons}&
			\makebox[\blen][c]{\parbox{35pt}{\centering $\mathscr{Y}_0^<$\\\rdef{def:clOY}}}\ar@{->>}@/^/@<-8pt>[l]^{\mathsf{O}}\\
			&\makebox[\blen][c]{\parbox{35pt}{\centering$\mathscr F$\\{\rdef{def:tS}}}}\ar@{->>}[dl]_{\parbox{65pt}{\scriptsize\centering ($2$-bounded) $\mathsf{S}$}}\ar@{->>}@/^/@<8pt>[r]^{\widehat{\mathsf L}}\ar@{}@<8pt>[r]|{\rightleftharpoons}\ar@{->>}[u]_{\mathsf{Reduct}}&\makebox[\blen][c]{\parbox{35pt}{\centering $\mathscr T^<$\\\rdef{def:clOY}}}\ar@{->>}@/^/@<-8pt>[l]^{\widehat{\mathsf O}}\ar@{->>}@/^/@<8pt>[r]^{\widehat{\mathsf{G}}}\ar@{->>}[d]^{\mathsf{Reduct}}\ar@{->>}[u]_{\mathsf{Reduct}}\ar@{}@<8pt>[r]|{\rightleftharpoons}&\makebox[\blen][c]{\parbox{35pt}{\centering $\mathscr G^<$\\\rdef{def:clGo}}}\ar@{->>}@/^/@<-8pt>[l]^{\widehat{\mathsf U}}\ar@{->>}[d]^{\mathsf{Reduct}}\ar@{->>}@/^/@<8pt>[r]^{\mathsf{T}_1}\ar@{}@<8pt>[r]|{\rightleftharpoons}&\parbox{40pt}{\centering$\mathscr P\subseteq\overline{\mathscr P}$\\{\scriptsize~}}\ar@{->>}@/^/@<-8pt>[l]^{\mathsf{T}_2}\\
			\parbox{11pt}{\centering $\mathscr C_0$\\{\scriptsize~}}\ar@{..>}@<5pt>[rr]_{\parbox{65pt}{\scriptsize\centering twin-model}}&&\makebox[\blen][c]{\parbox{35pt}{\centering$\mathscr T$\\\rdef{def:clT}}}
			\labelmargin-{6pt}
			\ar@{..>}@/^/[ul]^>>>>>>>{\scriptsize\parbox{50pt}{\centering full\\ twin-model}}
			\labelmargin+{6pt}
			\ar@{->>}@/^/@<8pt>[r]^{\mathsf{G}}\ar@{}@<8pt>[r]|{\rightleftharpoons}&\makebox[\blen][c]{\parbox{35pt}{\centering $\mathscr G$\\\rdef{def:clG}}}\ar@{->>}@/^/@<-8pt>[l]^{\mathsf{U}}
		}
		\endxy
		\]
	\end{minipage}
	\caption{Relations between the classes of structures involved in the proof of the main result.
	The interpretation $\mathsf S$ is defined in \Cref{def:tS}, 
	the transduction pairing $(\mathsf L,\mathsf O)$ in \Cref{lem:otm},
the  transduction pairing $(\widehat{\mathsf L},\widehat{\mathsf O})$ as a remark just after \Cref{def:clOY}, 
the transduction pairing $(\mathsf G,\mathsf U)$ in \Cref{lem:toGaifman}, 
and the  transduction pairing $(\widehat{\mathsf G},\widehat{\mathsf U})$ as a remark just after \Cref{def:clGo}.
		\label{fig:diag}
	}
\end{figure}

The full transformation of a graph $G$ into a permutation $\sigma$ and the inverse transformation (obtained as a transduction) are displayed on \Cref{fig:full} on an example.

\clearpage
\begin{figure}[t]
	\centering
	\includegraphics[width=\textwidth]{full}
	\caption{From a graph $G$ to a permutation $\sigma$, and back.}
	\label{fig:full}
\end{figure}

\section{Preliminaries}
\subsection{Relational structures}
\label{sec:rel}
We assume basic knowledge of first-order logic and refer
to~\cite{hodges1997shorter} for extensive background.
A~\emph{relational signature} $\Sigma$ is a finite set of relation
symbols $R_i$ with associated arity $r_i$. A~\emph{relational
	structure}~$\mathbf A$ with signature $\Sigma$, or simply a
\emph{$\Sigma$-structure} consists of a \emph{domain}~$A$ together
with relations $R_i({\mathbf A})\subseteq A^{r_i}$ for each relation
symbol $R_i\in\Sigma$ with arity $r_i$. The
relation~$R_i({\mathbf A})$ is called the \emph{interpretation}
of~$R_i$ in~$\mathbf A$.  We will often speak of a relation instead of
a relation symbol when there is no ambiguity. We may write $\mathbf A$
as $(A,R_1({\mathbf A}),\dots,R_s({\mathbf A}))$. In this paper we
will consider relational structures with finite
We will further assume
that $\Sigma$-structures are \emph{irreflexive}, that is,
$(v,v)\not\in R_i(\mathbf{A})$ for every element $v\in A$ and relation
symbol $R_i\in\Sigma$. A unary relation is called a \emph{mark}. Let
$R$ be a binary relation symbol and let $u,v\in A$.  That the pair
$(u,v)$ lies in the interpretation of $R$ in $\mathbf A$ will be
indifferently denoted by $(u,v)\in R(\mathbf A)$ or 
$\mathbf A\models R(u,v)$.
More generally, for a formula $\phi(x_1,\dots,x_k)$, a
$\Sigma$-structure~$\mathbf A$, an integer $\ell<k$ and
$a_1,\dots,a_\ell\in A$ we define
$\phi(\mathbf A,a_1,\dots,a_\ell):= \{(x_1,\dots,x_{k-\ell})\in
A^{k-\ell}: \mathbf A\models
\phi(x_1,\dots,x_{k-\ell},a_1,\dots,a_\ell)\}.$
In this paper, by formula, we mean a first-order formula in
the language of $\Sigma$-structures, where $\Sigma$ is usually understood from the context.
Let $\mathbf A=(A,R_1({\mathbf A}),\dots,R_s({\mathbf A}))$ be a
$\Sigma$-structure and let $X\subseteq A$.
The \emph{substructure} of $\mathbf A$ \emph{induced by} $X$ is the
$\Sigma$-structure
$\mathbf A[X]=(X,R_1({\mathbf A})\cap X^{r_1},\dots,R_k({\mathbf
	A})\cap X^{r_s})$.

\emph{Graphs} are structures with a single binary relation $E$
encoding adjacency; this relation is irreflexive and symmetric.
Graphs of particular interest in this paper are rooted trees.  For a
rooted tree $Y$, we denote by $I(Y)$ the set of internal nodes of $Y$,
by $L(Y)$, the set of leaves of~$Y$, by $V(Y)=I(Y)\cup L(Y)$ the set of vertices of $Y$,
by $r(Y)$, the root of $Y$, and
by $\preceq_Y$, the partial order on $V(Y)$ defined by $u\preceq_Y v$
if the unique path in $Y$ linking $r(Y)$ and $v$ contains~$u$ (i.e.,
if $u=v$ or~$u$ is an \emph{ancestor} of~$v$ in~$Y$). For a non-root
vertex $v$, we further denote by $\pi_Y(v)$ the \emph{parent} of $v$,
which is the unique neighbor of $v$ smaller than $v$ with respect to
$\preceq_Y$. (We further define $\pi_Y(r(Y))=r(Y)$, so that $\pi_Y$ is defined on all the vertices of $Y$, the root being the only fixed point.)
A rooted \emph{binary} tree is a rooted tree such that
every internal node has exactly two children.

Let $Y$ be a rooted tree and let $A$ be a subset of vertices of $Y$ closed by pairwise least common ancestor (that is: the least common ancestor in $Y$ of any two vertices in $A$ also belongs to $A$). The \emph{subtree of}  $Y$ \emph{induced by} $A$ is the rooted tree $Y'$, whose associated tree order $\preceq_{Y'}$ is the restriction to $A$ of the  tree order $\preceq_Y$ associated to $Y$.  In particular, $A$ is the vertex set of $Y'$.

\emph{Partial orders} are structures with a single antisymmetric and
transitive binary relation~$\prec$.  Particular partial orders will be
of interest here.  \emph{Linear orders} (also called \emph{total orders}) are partial orders such that
$\forall x\,\forall y\ \bigl((x\prec y) \vee (y\prec x) \vee (y=x)\bigr)$.  \emph{Tree orders}
are partial orders that satisfy the following axioms:
$\forall x\,\forall y\,\forall z\ \big((x\prec z\wedge y\prec
z)\rightarrow((x\prec y)\vee (y\prec x)\vee (x=y))\big)$ and
$\exists r\,\forall x\ ((x=r)\vee (r\prec x))$. 
The minimum element of a tree order ($r$ in the previous equation) is its \emph{root}, and its maximal elements are its \emph{leaves}.
It will be convenient to use $\preceq, \succ, \succeq$
with their obvious meaning.  Let $(X,\prec)$ be a
tree order. The $\emph{infimum}$ $\inf(u,v)$ of two
elements $u,v\in X$ is the unique element $w\in X$ such that
$w\preceq u, w\preceq v$, and
$\forall z\ \bigl(((z\preceq u)\wedge (z\preceq v))\rightarrow (z\preceq
w)\bigr)$.  Note that $\inf(x,y)$ is first-order definable from $\prec$, 
hence can be
used as a term in our formulas.  
An element $x$ is \emph{covered} by an element $y$
if $x\prec y$ and there is no element $z$ with $x\prec z\prec y$. 
A \emph{binary tree order} is a tree order
such that every non-maximal element is covered by exactly two elements.

\emph{Ordered graphs} are structures with two binary relations, $E$
and $<$, where $E$ defines a graph and $<$ defines a linear order.  We
denote ordered graphs as $G^<=(V,E,<)$.

A {\em permutation} is represented as 
a structure $\sigma=(V,<_1,<_2)$, where $V$ is
a finite set and where~$<_1$ and~$<_2$ are two linear orders on this
set (see e.g.\ \cite{cameron2002homogeneous,albert2020two}).  Two
permutations $\sigma=(V,<_1,<_2)$ and $\sigma'=(V',<_1',<_2')$ are
{\em isomorphic} if there is a bijection between~$V$ and~$V'$
preserving both linear orders.  Let $X\subseteq V$. The {\em
	sub-permutation} of $\sigma$ induced by~$X$ is the permutation
on~$X$ defined by the two linear orders of $\sigma$ restricted to~$X$.
The isomorphism types of the sub-permutations of a permutation
$\sigma$ are the {\em patterns} of $\sigma$.  A class $\mathscr P$ of
(isomorphism types~of) permutations is {\em hereditary} (or
\emph{closed}) if it is closed under taking sub-permutations.  A
\emph{permutation class} is a hereditary class of permutations. A
permutation class is \emph{proper} if it is not the class of all
permutations. Note that the terms ``class of permutations'' and ``permutation class'' are not equivalent, the second referring to a hereditary class of permutations, as it is customary (see e.g.~\cite{bona2012combinatorics}).

\subsection{Interpretations}

Let $\Sigma,\Sigma'$ be signatures.  A \emph{simple interpretation} (or, simply, an \emph{interpretation}, since we will only consider these in this article)
$\mathsf I$ of $\Sigma'$-struc\-tures in $\Sigma$-struc\-tures is
defined by a $\Sigma$-formula $\rho_0(x)$, and a $\Sigma$-formula
$\rho_{R'}({x}_1,\dots,{x}_k)$ for each $k$-ary relation symbol
$R'\in\Sigma'$.
Let~$\mathsf I$ be an interpretation of \mbox{$\Sigma'$-structures} in
$\Sigma$-structures, where $\Sigma'=\{R_1',\dots,R_s'\}$.  For each
$\Sigma$-structure~$\mathbf A$ we denote by
$\mathsf I(\mathbf A)=(\rho_0(\mathbf A),\rho_{R_1'}(\mathbf
A),\dots,\rho_{R_s'}(\mathbf A))$ the \mbox{$\Sigma'$-structure} interpreted
by $\mathsf I$ in $\mathbf A$. Similarly, for a class $\mathscr C$ of $\Sigma$-structures, we
denote by~$\mathsf I(\mathscr C)$ the set
$\{\mathsf I(\mathbf{A}) : \mathbf{A}\in \mathscr C\}$.

We denote by $\sh_{\Sigma^+\rightarrow\Sigma}$ (or simply $\sh$ 
when~$\Sigma$ and~$\Sigma^+$ are clear from context) the interpretation
that ``forgets'' the relations in $\Sigma^+\setminus\Sigma$ while
preserving all the other relations and the domain. For a
$\Sigma^+$-structure $\mathbf B$, the $\Sigma$-structure
$\sh(\mathbf B)$ is called the $\Sigma$-\emph{reduct} (or simply
\emph{reduct} if $\Sigma$ is clear from the context) of $\mathbf B$.
A class $\mathscr C$ is a \emph{reduct} of a class $\mathscr D$ if
$\mathscr C=\sh(\mathscr D)$. Conversely, a class $\mathscr D$ is an
\emph{expansion} of $\mathscr C$ if $\mathscr C$ is a reduct of
$\mathscr D$.

Another important interpretation is $\mathsf{Gaifman}_\Sigma$ (or
simply $\mathsf{Gaifman}$ when $\Sigma$ is clear from context), which
maps a \mbox{$\Sigma$-structure}~$\mathbf A$ to its \emph{Gaifman graph},
whose vertex set is $A$ and whose edge set is the set of all pairs of distinct
vertices included in a tuple of some relation.

Note that an interpretation of $\Sigma_2$-structures in
$\Sigma_1$-structure naturally defines an interpretation of
$\Sigma_2^+$-structures in $\Sigma_1^+$-struc\-tures if
$\Sigma_2^+\setminus\Sigma_2=\Sigma_1^+\setminus\Sigma_1$ by leaving
the relations in $\Sigma_1^+\setminus\Sigma_1$ unchanged (that is, by
considering $\rho_R(x_1,\dots,x_k)=R(x_1,\dots,x_k)$ for these
relations).

\subsection{Transductions}\label{sec:transductions}

Let $\Sigma,\Sigma'$ be signatures. A \emph{simple transduction} $\mathsf T$
from $\Sigma$-struc\-tures to \mbox{$\Sigma'$-structures} is
defined by a simple interpretation~$\mathsf I_{\mathsf T}$ of
$\Sigma'$-structures in $\Sigma^+$-structures, where~$\Sigma^+$ is a
signature obtained from $\Sigma$ by adding finitely many marks. For a
\mbox{$\Sigma$-structure}~$\mathbf A$, we denote by
$\mathsf T(\mathbf A)$ the set of all
$\mathsf I_{\mathsf T}(\mathbf B)$ where $\mathbf B$ is a
$\Sigma^+$-structure with reduct $\mathbf A$:
$ \mathsf T(\mathbf A)=\{\mathsf I_{\mathsf T}(\mathbf B): \sh(\mathbf
B)=\mathbf A\}$.
Let $k\in\mathbb N$. The \emph{$k$-blowing} of a 
$\Sigma$-structure~$\mathbf A$ is the $\Sigma'$-structure 
$\mathbf B=\mathbf A\bullet k$,
where $\Sigma'$ is the signature obtained from $\Sigma$ by adding a
new binary relation~$\sim$ encoding an equivalence relation. The
domain of $\mathbf A\bullet k$ is $B=A\times[k]$, and, denoting by~$p_1$ and $p_2$
the projections $A\times[k]\rightarrow A$ and $A\times [k]\rightarrow [k]$ we have, for all $x,y\in B$,
$\mathbf B\models x\sim y$ if $p_1(x)=p_1(y)$, and (for $R\in\Sigma$)
$\mathbf B\models R(x_1,\dots,x_s)$ if 
$\mathbf A\models R(p_1(x_1),\dots,p_1(x_s))$ and $p_2(x_1)=\ldots= p_2(x_s)$.  A \emph{copying
	transduction} is the composition of a $k$-blowing and a simple
transduction; the integer $k$ is the \emph{blowing factor} of the copying transduction $\mathsf T$ and is denoted by $\bl(\mathsf T)$.
It is easily checked that the composition of two copying
transductions is again a copying transduction. In the following by the term transduction we mean a copying transduction. Note that for every transduction $\mathsf T$ from $\Sigma$-structure to $\Sigma'$, for every $\Sigma$-structure $\mathbf A$ and for every $\Sigma'$-structure $\mathbf B\in\mathsf T(\mathbf A)$ we have $|B|\leq \bl(\mathsf T)\,|A|$. 

Let $\mathsf T,\mathsf T'$ be transductions from $\Sigma$-structures
to $\Sigma'$-struc\-tures, and let $\mathscr C$ be a class of
$\Sigma$-structures.
The transduction $\mathsf T'$ \emph{subsumes} the transduction
$\mathsf T$ \emph{on} $\mathscr C$ if
$\mathsf T'(\mathbf A)\supseteq\mathsf T(\mathbf A)$ for all
$\mathbf A\in\mathscr C$.  If $\mathscr C$ is a class of
$\Sigma$-structures, we define
$\mathsf T(\mathscr C)=\bigcup_{\mathbf A\in\mathscr C}\mathsf
T(\mathbf A)$.  We say that a class~$\mathscr D$ of
$\Sigma'$-structures is a \emph{$\mathsf T$-transduction} of
$\mathscr C$ if $\mathscr D\subseteq\mathsf T(\mathscr C)$ and, more
generally, the class~$\mathscr D$ is a \emph{transduction} of the
class $\mathscr C$, and we write
$\xy\xymatrix{\mathscr C\ar@{->>}[r]&\mathscr D}\endxy$, if there
exists a transduction $\mathsf T$ such that $\mathscr D$ is a
$\mathsf T$-transduction of~$\mathscr C$.  The negation of
$\xy\xymatrix{\mathscr C\ar@{->>}[r]&\mathscr D}\endxy$ is denoted by
$\xy\xymatrix{\mathscr C\ar@{->>}|{/}[r]& \mathscr D}\endxy$.
Note that we require only 
the inclusion of $\mathscr D$ in $\mathsf T(\mathscr C)$, and not the equality. 
The class
$\mathscr D$ is a \emph{$c$-bounded} $\mathsf T$-transduction of the
class $\mathscr C$ if, for every $\mathbf B\in\mathscr D$, there exists
$\mathbf A\in\mathscr C$ with $\mathbf B\in\mathsf T(\mathbf A)$ and
$|A|\leq c\,|B|$. 
Two classes $\mathscr C$ and $\mathscr D$ are
\emph{transduction equivalent}
if each is a transduction of the other.
	A \emph{transduction pairing} of two classes~$\mathscr C$
	and~$\mathscr D$ is a pair $(\mathsf D,\mathsf C)$ of (copying) transductions,
	such that 
	$
	\forall\mathbf A\in\mathscr C\,\exists \mathbf B\in\mathsf D(\mathbf A)\cap\mathscr D\,:\, \mathbf A\in\mathsf C(\mathbf B)$ and  
	\mbox{$\forall\mathbf B\in\mathscr D$} $\exists \mathbf A\in\mathsf C(\mathbf B)\cap\mathscr C\,:\, \mathbf B\in\mathsf D(\mathbf A).
	$
	
	We denote by
	$\xy\xymatrix{\mathscr
		C\ar@{->>}@/^/[r]\ar@{}[r]|{\rightleftharpoons}&\mathscr
		D\ar@{->>}@/^/[l]}\endxy$ the existence of a transduction pairing of
	$\mathscr C$ and $\mathscr D$.   Note that if $(\mathsf D,\mathsf C)$ is a transduction pairing, then $\mathscr D$ is a $\bl(\mathsf C)$-bounded $\mathsf D$-transduction of $\mathscr C$ and~$\mathscr C$ is a $\bl(\mathsf D)$-bounded $\mathsf C$-transduction of $\mathscr D$. The following easy lemma will be useful.

	\begin{lem}
		\label{lem:pairing}
		Assume $\mathscr D$ is a $\mathsf D$-transduction of\, $\mathscr C$,
		$\mathscr C$ is a $\mathsf C$-transduction of\, $\mathscr D$,
		 and 
		for every $\mathbf A\in\mathscr C$ and every $\mathbf B\in\mathsf D(\mathbf A)\cap \mathscr D$ we have $\mathbf A\in\mathsf C(\mathbf B)$. Then $(\mathsf D,\mathsf C)$ is a transduction pairing of\, $\mathscr C$ and $\mathscr D$. 
	\end{lem}
	
	\begin{proof}
		Let $\mathbf B\in\mathscr D$. As $\mathscr D$ is a $\mathsf D$-transduction of $\mathscr C$ there exists $\mathbf A\in\mathscr C$ with $\mathbf B\in\mathsf D(\mathbf A)$.	Then $\mathbf A\in\mathsf C(\mathbf B)\cap \mathscr C$. 
	\end{proof}

	Note that a transduction $\mathsf T$ from $\Sigma_1$-structures to
	$\Sigma_2$-struc\-tures naturally defines a transduction $\widehat{\mathsf T}$ from
	$\Sigma_1^+$-structures to $\Sigma_2^+$-structures if
	$\Sigma_2^+\setminus\Sigma_2=\Sigma_1^+\setminus\Sigma_1$ by leaving
	the relations in $\Sigma_1^+\setminus\Sigma_1$ unchanged.
	The transduction $\widehat{\mathsf T}$ is called the \emph{natural generalization} of $\mathsf T$ to
	$\Sigma_1^+$-structures.
	\subsection{Twin-width}
	\label{sec:tww} 
		
	In order to define twin-width, we first need to introduce some
	preliminary notions, which generalize the notion of trigraphs ({i.e.}, graphs with some red edges)
	introduced in~\cite{twin-width1}.
	Let $\Sigma$ be a binary relational
	signature.  The signature $\Sigma^*$ is obtained by adding, for each
	binary relation symbol $R$ a new binary relation symbol~$R^\ast$. The symbol~$R^\ast$ will always be interpreted as 
	a symmetric relation 
	and plays for~$R$ the role of red edges
	in~\cite{twin-width1}.

	Let $\mathbf A$ be a $\Sigma^\ast$-structure, and let $u$ and $v$ be
	vertices of $\mathbf A$.  The vertices $u, v$ are \emph{$R$-clones}
	for a vertex $w$ and a relation $R\in\Sigma$ if we have
	$\mathbf A\models \big(R(u,w)\leftrightarrow R(v,w)\big)\wedge
	\big(R(w,u)\leftrightarrow R(w,v)\big)$ and no pair in $R^\ast$
	contains both $w$ and either $u$ or $v$.  The $\Sigma^*$-structure
	$\mathbf A'$ obtained by \emph{contracting} $u$ and $v$ into a new
	vertex $z$ is defined as follows:
	\begin{itemize}
		\item $A'=A\setminus\{u,v\}\cup\{z\}$;
		\item
		$R({\mathbf A'})\cap \bigl((A'\setminus\{z\})\times (A'\setminus\{z\})\bigr)=
		R({\mathbf A})\cap \bigl((A\setminus\{u,v\})\times (A\setminus\{u,v\})\bigr)$
		for all $R\in\Sigma^\ast$;
		\item for every vertex $w\in A'\setminus\{z\}$ and every $R\in\Sigma$
		such that $u$ and $v$ are $R$-clones for $w$, we let
		$\mathbf A'\models R(w,z)$ if $\mathbf A\models R(w,u)$, and
		$\mathbf A'\models R(z,w)$ if $\mathbf A\models R(u,w)$. (Note that this 
		does not change if we use $v$ instead of $u$);
		\item otherwise, for every vertex $w\in A'\setminus\{z\}$ and every
		$R\in\Sigma$ such that $u$ and $v$ are not $R$-clones for $w$ we
		let $\mathbf A'\models R^\ast(w,z)\wedge R^\ast(z,w)$.
	\end{itemize}
	
	A \emph{$d$-sequence} for a $\Sigma$-structure $\mathbf A$ is a
	sequence $\mathbf A_n,\dots,\mathbf A_1$ of $\Sigma^*$-structures
	such that: $\mathbf A_n$ is isomorphic to $\mathbf A$ (when considered as a $\Sigma^*$-structure with empty $R^*$); $\mathbf A_1$
	is the $\Sigma^*$-structure with a single element; for every
	$1\leq i<n$, $\mathbf A_i$ is obtained from $\mathbf A_{i+1}$ by
	performing a single contraction; for every $1\leq i<n$ and every
	$v\in A_i$, the sum of the degrees in relations
	$R^\ast\in\Sigma^\ast\setminus\Sigma$ of~$v$ in $\mathbf A_i$ is less
	or equal to $d$ (the degree of $v$ in relation $R^\ast$ is defined as
	the degree of $v$ in the undirected graph $(A,R^\ast(\mathbf A))$).
	When $d$ is not specified, we shall speak of a \emph{contraction sequence}; see \Cref{fig:tww} for an illustration.
The minimum $d$ such that there exists a $d$-sequence for a
$\Sigma$-structure~$\mathbf A$ is the \emph{twin-width}
$\tww(\mathbf A)$ of $\mathbf A$.   %
This definition for binary relational structures differs from the one given in \cite{twin-width1} (where red edges are not counted with multiplicity), but will be more convenient in our setting.
However, the definitions differ by at most a constant factor (linear in $|\Sigma|$), thus the derived notion of class with bounded twin-width coincides. 

A~crucial property of twin-width is the following result.
\begin{thmC}[{\cite[Theorem 8.1]{twin-width1}}]
	\label{thm:trans}
	Let $\mathscr C, \mathscr D$ be classes of binary structures.  If\,
	$\mathscr C$ has bounded twin-width and~$\mathscr D$ is a
	transduction of\, $\mathscr C$, then $\mathscr D$ has bounded
	twin-width.
\end{thmC}

\section{Classes with bounded star chromatic number}
One of the key ingredients of the proof will rely on a transduction pairing between a class~$\mathscr C$ and the class of Gaifman graphs of the structures in $\mathscr C$. Though such a pairing does not exist for general classes of structures (see the discussion below), we prove in this section that this is the case if the Gaifman graphs have bounded star chromatic number.

Recall that a \emph{star coloring} of a graph $G$ is a proper coloring of $G$ such that 
any two color
classes induce a star forest (i.e., a disjoint union of stars); 
the \emph{star chromatic 
	number} $\chi_{\rm st}(G)$ of  $G$ is the minimum
number of colors in a star coloring of~$G$.
Note that a star coloring of a graph with $c$ colors defines a partition of the edge set into $\binom{c}{2}$ star forests.
Although we are interested only in binary relational structures in this paper,
the next lemma holds (and is proved) for general relational
signatures.

\begin{figure*}
	\begin{center}
		\includegraphics[width=0.95\textwidth]{}
		\caption{A contraction sequence, a so-called block representation of the contractions, and a twin-model.}
		\label{fig:tww}
	\end{center}	
\end{figure*}

\begin{lem}
	\label{lem:Gaifman}
	Let $\Sigma$ be a relational signature, let $\mathscr C$ be a class of
	$\Sigma$-structures, and let~$c$ be an integer. There exists a simple 
	transduction $\mathsf{Unfold}_{\Sigma,c}$ from graphs to $\Sigma$-structures
	such that if the Gaifman graphs of the structures in $\mathscr C$ have
	star chromatic number at most $c$, then 
	$(\mathsf{Gaifman}_\Sigma,\mathsf{Unfold}_{\Sigma,c})$ is a transduction pairing of $(\mathscr C,\mathsf{Gaifman}_\Sigma(\mathscr C))$.
\end{lem}
\begin{proof}
	Let
	$c=\sup\{\chi_{\rm st}(G): G\in\mathsf{Gaifman}_\Sigma(\mathscr C)\} <
	\infty$.  Let $\mathbf A\in\mathscr C$, let
	$G=\mathsf{Gaifman}_\Sigma(\mathbf A)$, and let
	$\gamma:V(G)\rightarrow [c]$ be a star coloring of $G$. In $G$, any
	two color classes induce a star forest, which we orient away from
	their centers. This way we get an orientation $\vec G$ of $G$ such
	that for every vertex $v$ and every in-neighbor $u$ of $v$, the
	vertex $u$ is the only neighbor of $v$ with color~$\gamma(u)$.
	Let $R\in\Sigma$ be a relation of arity $k$.  For each
	$(u_1,\dots,u_k)\in R(\mathbf A)$, $u_1,\ldots, u_k$ induce a 
	tournament in~$\vec G$. Every tournament
	has at least one directed Hamiltonian path~\cite{redei1934kombinatorischer}.
	We fix one such Hamiltonian path and let
	$p(u_1,\dots,u_k)$ be the index of the last vertex in the
	path. 
	Let $a=p(u_1,\dots,u_k)$, let
	$(c_1,\dots,c_k)=(\gamma(u_1),\dots,\gamma(u_k))$. Then there exists
	in $G$ exactly one clique of size $k$ containing $u_a$ with vertices
	colored $c_1,\dots,c_k$, as a consequence of the following claim.
	(In the claim, $\gamma(K)=\{\gamma(v): v\in K\}$.)
	\begin{claim}
		Assume $K_1,K_2$ are two  $k$-cliques with $\gamma(K_1)=\gamma(K_2)$. 
		If there exists in $\vec G$ a directed Hamiltonian path $\vec P$ of $\vec{G}[K_1]$ ending at a vertex $v\in K_1\cap K_2$, then $K_1=K_2$.
	\end{claim}
\begin{claimproof}
	We prove the statement by induction on $k$. If $k=1$ the statement is obviously true as $K_1=K_2=\{v\}$. Assume that the statement holds for some integer $k\geq 1$, let $K_1,K_2$ be $(k+1)$-cliques with $\gamma(K_1)=\gamma(K_2)$ and assume there exists a directed Hamiltonian path $\vec P$ of $\vec{G}[K_1]$ ending at a vertex $v\in K_1\cap K_2$. Let $u$ be the penultimate vertex of $\vec P$. As $\gamma(K_1)=\gamma(K_2)$, there exists a vertex $u'\in K_2$ with $\gamma(u')=\gamma(u)$. Note that $u'\neq v$ as $\gamma(u)\neq \gamma(v)$. 	As $u$ is an in-neighbor of $v$, it is the only neighbor of $v$ in $G$ with color~$\gamma(u)$. As $K_2$ induces a clique, $u'$ is a neighbor of $v$. Hence, $u'=u$.
	Let $K_1'=K_1\setminus\{v\}$ and $K_2'=K_2\setminus\{v\}$. Then, $K_1'$ and $K_2'$ are $k$-cliques with $\gamma(K_1')=\gamma(K_2')=\gamma(K_1)\setminus\{\gamma(v)\}$ and $\vec P-\{v\}$ is a directed Hamiltonian path of $\vec G[K_1']$ ending at $u\in K_1'\cap K_2'$. By the induction hypothesis we have
	$K_1'=K_2'$, hence $K_1=K_2$.
\end{claimproof}

	For each relation $R\in\Sigma$
	with arity $k$ and each $(u_1,\dots,u_k)\in R(\mathbf A)$ we put at
	$v=u_{p(u_1,\dots,u_k)}$ a mark
	$M^R_{\gamma(u_1),\dots,\gamma(u_k)}$. 
	Then, in the graph~$G$, the vertex $v$ belongs to exactly one clique of size~$k$ with vertices colored
	$\gamma(u_1),\dots,\gamma(u_k)$, which allows recovering the tuple $(u_1,\dots,u_k)$, as all the colors are distinct.
	We further put at each vertex
	$v$ a mark~$C_{\gamma(v)}$. Then the structure $\mathbf A$ is
	reconstructed by the transduction $\mathsf{Unfold}_{\Sigma,c}$ defined by the formulas
	\begin{align*}
	\rho_R(x_1,\dots,x_k):=
	\bigvee_{c_1,\dots,c_k}\Bigl(\bigwedge_{1\leq j\leq
		k}C_{c_i}(x_i)\wedge\bigwedge_{1\leq i<j\leq k}E(x_i,x_j)\hspace{1mm} \wedge 
	\bigvee_{1\leq i\leq k}M^R_{c_1,\dots,c_k}(x_i)\Bigr). \tag*{\qedhere}
	\end{align*}
\end{proof}
Note that the condition of \Cref{lem:Gaifman} is almost tight:
if a class $\mathscr C$ of undirected graphs contains graphs with arbitrarily large star chromatic number and girth, then the class $\vec{\mathscr C}$ of all orientations of the graphs in $\mathscr C$ is not a transduction of~$\mathscr C$~\cite{SODA_msrw}.

\Cref{lem:Gaifman} will be particularly significant in conjunction with the
following results. 
Recall that a graph $G$ is \mbox{\emph{$d$-degenerate}} if every induced subgraph of $G$ contains a vertex of degree at most~$d$, and that a class $\mathscr C$ is \emph{degenerate} if all the graphs in $\mathscr C$ are $d$-degenerate for some~$d$. A class $\mathscr C$ of graphs has \emph{bounded expansion} if, for every integer $k$, the class  of all graphs $H$ whose $k$-subdivision is a subgraph of some graph in $\mathscr C$ is degenerate \cite{Sparsity}.

\begin{thmC}[\cite{twin-width2}]
\label{thm:sptww}
Every degenerate class of graphs with bounded twin-width has bounded expansion.
\end{thmC}

\begin{thmC}[\cite{POMNI}]
\label{thm:BEchist}
Every class of graphs with bounded expansion has bounded star chromatic number.
\end{thmC}

\section{Twin-models}
In this section, we formalize the notions of twin-models and ranked
twin-models, which are reminiscent of the ``ordered union trees'' and
``interval biclique partitions'' adopted in~\cite{twin-width3}. This
structure will allow encoding a contraction sequence and to give an
alternative definition of twin-width.
As mentioned in the introduction, we fix a class $\mathscr C_0$ of binary relational structures with bounded twin-width.

\subsection{Twin-models, ranking, layers, and width}

\label{sec:twm}

\begin{defi}[twin-model]
	Let $\Sigma=(R_1,\dots,R_k)$ be a binary relational signature.  A
	{\em $\Sigma$-twin-model} (or simply a \emph{twin-model} when
	$\Sigma$ is clear from the context) is a tuple
	$(Y,Z_{R_1},\dots,Z_{R_k})$ where $Y$ is a rooted binary tree and
	each~$Z_{R_i}$ is a binary relation satisfying the following transversality, 
	minimality,  and consistency conditions:
	\begin{itemize}
		\item {\rm (transversality)} if $(u,v)\in Z_{R_i}$, then $u$ and $v$ are not comparable in the tree order $\preceq_Y$;
		\item {\rm (minimality)} if $(u,v)\in Z_{R_i}$, then there exists no
		$(u',v')\neq (u,v)$ with $u'\preceq_Y u$, $v'\preceq_Y v$ and
		$(u',v')\in Z_{R_i}$;
		\item {\rm (consistency)} if a traversal of a cycle  $\gamma$ in $Y\cup \bigcup_i Z_{R_i}$ respects the \emph{natural orientation} of 
		the $Y$-edges (that is: the orientation of $Y$ away from the root), then $\gamma$ contains two consecutive edges
		in $\bigcup_i Z_{R_i}$.
	\end{itemize}
	
	A twin-model $(Y,Z_{R_1},\dots,Z_{R_k})$ {\em defines} the
	$\Sigma$-structure $\mathbf A$ (or $(Y,Z_{R_1},\dots,Z_{R_k})$ is a
	{\em twin-model of}~$\mathbf A$) if $A=L(Y)$ and, for each
	$R_i\in\Sigma$, $R_i({\mathbf A})$ is the set of all pairs $(u, v)$
	such that there exists $u'\preceq_Y u$ and $v'\preceq_Y v$ with
	$(u',v')\in Z_{R_i}$.
\end{defi}

\begin{defi}[ranking, boundaries, layers, and width]
	Let $(Y,Z_{R_1},\dots,Z_{R_k})$ be a twin-model of a
	$\Sigma$-structure $\mathbf A$ with $|A|=n$.  A {\em ranking} $\tau$
	of the twin-model $(Y,Z_{R_1},\dots,Z_{R_k})$ is a mapping from
	$V(Y)$ to $[n]$ that satisfies the following labeling, monotonicity,
	and synchronicity conditions:
	\begin{itemize}
		\item {\rm (labeling)} the function $\tau$ restricted to $I(Y)$ is a
		bijection with $[n-1]$, and is equal to $n$ on~$L(Y)$;
		\item {\rm (monotonicity)} If $u\prec_Y v$, then $\tau(u)<\tau(v)$;
		\item {\rm (synchronicity)} If $(u,v)\in Z_{R_i}$, then 
		$\max(\tau(\pi_Y(u)),\tau(\pi_Y(v)))<\min(\tau(u),\tau(v))$.
	\end{itemize}
	
	A {\em ranked twin-model} is a tuple
	$\mathfrak T=(Y,Z_{R_1},\dots,Z_{R_k},\tau)$, where
	$(Y,Z_{R_1},\dots,Z_{R_k})$ is a twin-model, and $\tau$ is a ranking
	of $(Y,Z_{R_1},\dots,Z_{R_k})$ (See \Cref{fig:tm}).
	
	\begin{figure}[h!t]
		\hfill
		\begin{minipage}{.15\columnwidth}
			\includegraphics[width=\textwidth]{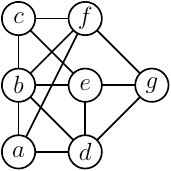}
		\end{minipage}\hfill
		\begin{minipage}{.75\columnwidth}
			\includegraphics[width=\textwidth]{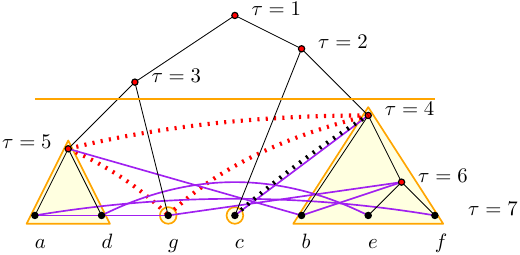}
		\end{minipage}
		\hfill{}
		\caption{A graph $G$ and a ranked twin-model of $G$. The boundary $\partial_4Y$ is the set $\{5,g,c,4\}$ (internal vertices labeled by $\tau$), which can be represented as the set of the yellow zones. The relations of $\mathbf L_4$ are depicted as dotted heavy lines (black for $R$, red for~$R^\ast$). The width of this twin-model is $2$.}
		\label{fig:tm}
	\end{figure}

	\smallskip For $1<t\leq n$, the {\em boundary} $\partial_t Y$ is the
	set
	$ \partial_t Y=\{u\in V(Y)\mid \tau(u)\geq t \wedge \tau(\pi_Y(u))<t\}
	$ and the {\em layer} $\mathbf L_t$ is the $\Sigma^\ast$-structure
	with vertex set $\partial_t Y$ and relations
	\begin{align*}
		R_i(\mathbf L_t) & =\{(u,v)\in \partial_t Y\times \partial_t Y\mid \exists u'\preceq_Y u,\ \exists v'\preceq_Y v,\ (u',v')\in Z_{R_i}\}\\ 	
		R_i^\ast(\mathbf L_t) & =\{(u,v)\in \partial_t Y\times \partial_t Y\mid \exists u'\succeq_Y u,\  \exists v'\succeq_Y v,\ (u',v')\neq(u,v)\\
		&\hspace{7cm} \text{ and }\{(u',v'),(v',u')\}\cap Z_{R_i}\neq\emptyset\}.
	\end{align*}
	
	For $t=1$ we define the boundary $\partial_1 Y=\{r(Y)\}$ and the layer
	$\mathbf L_1$ as the $\Sigma^\ast$-structure with unique vertex
	$r(Y)$.
	
	The {\em width} of the ranked twin-model
	$\mathfrak T=(Y,Z_{R_1},\dots,Z_{R_k},\tau)$ is defined as
	\[
	\text{\rm width}(\mathfrak T)=\max_{t\in [n]}\max_{v\in
		L_t}\sum_{R_i\in\Sigma}|R_i^\ast(\mathbf L_t,v)|,
	\]
	where $R_i^\ast(\mathbf L_t,v)$ denotes the set $\{u: \mathbf L_t\models R_i^\ast(u,v)\}$. Hence,  $|R_i^\ast(\mathbf L_t,v)|$ is the degree of $v$ in the symmetric relation $R_i^\ast$ in $\mathbf L_t$.
\end{defi}

At first sight, the consistency condition of a twin-model (of $\mathbf A$) may seem contrived.
One may for instance wonder if the minimality and consistency conditions are not simply equivalent to the property that every $(u,v) \in R_i(\mathbf A)$ is \emph{realized} by a unique unordered pair~$u',v'$ with $(u',v') \in Z_{R_i}$, $u' \preceq_Y u$, and $v' \preceq_Y v$.
In case the structure $\mathbf A$ encodes a simple undirected graph $G$ (with signature $\Sigma = (E)$), we would simply impose that the edges of $Z_E$ partition the edges of $G$ into bicliques.

In~\Cref{fig:why-consistency} we give a small example that shows that this property is not strong enough to always yield a ranking.
This illustrates why the consistency condition is what we want (no more, no less) and also serves as a visual support for the notions of contraction sequence, twin-model, and ranking.
\begin{figure}[h!t]
	\centering
	\scalebox{.6}{
		\begin{tikzpicture}[scale=0.9]
			\foreach \i/\j/\l in {0/0/a, 0/1/b, 2/0/c, 2/1/d, 1/3.2/e, 1/2.2/f}{
				\node[draw,circle,minimum size=0.66cm] (\l) at (\i,\j) {$\l$} ;
			}
			\foreach \i/\j in {b/c,b/d,d/e,d/f,f/a,f/b}{
				\draw (\i) -- (\j) ;
			}
			\foreach \i/\j/\l in {a/b/ab, c/d/cd, e/f/ef, ab/cd/abcd, abcd/ef/all}{
				\node[draw,rounded corners,fit=(\i) (\j)] (\l) {} ;
			}
			\foreach \i/\j/\s in {1/0/2,0/2.5/1,0/0.5/5,2/0.5/3,1/2.7/4}{
				\node at (\i,\j) {\tiny{\s}} ;
			}
			
			\begin{scope}[xshift=2.4cm]
				\foreach \i/\l in {1/a,2/b,3/c,4/d,5/e,6/f}{
					\node[draw,circle,inner sep=0.04cm,minimum size=0.5cm] (v\l) at (\i,0) {$\l$} ;
				}
				\foreach \i/\j/\l in {1.5/1.1/5,3.5/1.1/3,5.5/1.1/4, 2.5/2.2/2, 3.5/3.3/1}{
					\node[draw,circle,inner sep=0.04cm,minimum size=0.5cm] (v\l) at (\i,\j) {$\l$} ;
				}
				\foreach \i/\j in {a/5,b/5, c/3,d/3, e/4,f/4, 5/2,3/2, 2/1, 4/1}{
					\draw[<-] (v\i) -- (v\j) ;
				}
				\foreach \i/\j/\b in {b/c/0,b/d/-40,d/4/0,f/5/-5.5}{
					\draw[line width=0.05cm,blue] (v\i) to [bend left=\b] (v\j) ;
				}
			\end{scope}
			
			\begin{scope}[xshift=8.3cm]
				\foreach \i/\l in {1/a,2/b,3/c,4/d,5/e,6/f}{
					\node[draw,circle,inner sep=0.04cm,minimum size=0.5cm] (v\l) at (\i,0) {$\l$} ;
				}
				\foreach \i/\j/\l/\p in {1.5/1.1/5/$\alpha$,3.5/1.1/4/$\beta$,5.5/1.1/3/$\gamma$, 2.5/2.2/2/, 3.5/3.3/1/}{
					\node[draw,circle,inner sep=0.04cm,minimum size=0.5cm] (v\l) at (\i,\j) {\p} ;
				}
				\foreach \i/\j in {a/5,b/5, c/4,d/4, e/3,f/3, 5/2,4/2, 2/1, 3/1}{
					\draw[<-] (v\i) -- (v\j) ;
				}
				\foreach \i/\j/\b in {b/4/0,d/3/0,f/5/-5.5}{
					\draw[line width=0.05cm,blue] (v\i) to [bend left=\b] (v\j) ;
				}
			\end{scope}
	\end{tikzpicture}}
	\caption{Left: A 6-vertex graph and a contraction sequence, where the tiny digit in each box indicates the index of contracted vertices when they appear. Center: A twin-model of the graph, where the edges of $Z_E$ are in bold blue, and a ranking (for the internal nodes) of this twin-model that actually matches the contraction sequence. Right: A \emph{flawed twin-model} where the edge set $E$ is indeed partitioned by the pairs of~$Z_E$. Here no ranking is possible: 
		Let $\alpha,\beta,\gamma$ be the parents of $b,d,f$. By synchronicity, and by considering the pairs $(b,\beta)$, $(d,\gamma)$, and $(f,\alpha)$, we get that the labeling $\tau$ should satisfy 
		$\tau(\alpha)<\tau(\beta)$, $\tau(\beta)<\tau(\gamma)$, and $\tau(\gamma)<\tau(\alpha)$, which cannot be realized.
		There is indeed a cycle $\alpha b\beta d\gamma f$ with all the tree arcs oriented the same way, and without two consecutive edges of $Z_E$.
		On the contrary, all such cycles in the central tree have two consecutive edges of $Z_E$, like $5bd4f$ has $(b,d),(d,4) \in Z_E$.
	}
	\label{fig:why-consistency}
\end{figure}

\subsection{From a contraction sequence to a twin-model}

In this section, we prove that every $d$-sequence of a
$\Sigma$-struc\-ture~$\mathbf A$ defines a ranked twin-model of
$\mathbf A$ with width at most~$d$ (See \Cref{fig:seq2tm}).

A $d$-sequence $\mathbf A_n,\dots,\mathbf A_{1}$ for a
$\Sigma$-structure $\mathbf A$ defines a rooted binary tree $Y$ with
vertex set $V(Y)=\bigcup_i A_i$ and set of leaves $L(Y)=A_n$ as
follows: for each $i\in[n-1]$ let $z_i$ be the vertex of $A_i$ and
$u_i,v_i$ be the vertices of $A_{i+1}$ such that~$z_i$ results from
the contraction of $u_i$ and $v_i$ in $\mathbf A_{i+1}$.
Then $I(Y)=\{z_i: i\in [n-1]\}$, $r(Y)=z_1$, and the children of
$z_i$ in~$Y$ are the vertices $u_i$ and $v_i$.

For each relation $R\in \Sigma$ we define a binary relation $Z_R$ on
$V(Y)$ as follows.  Let $z_i$ be the vertex of $A_i$ resulting from
the contraction of $u_i$ and $v_i$ in $A_{i+1}$.  If
$(u_i,v_i)\in R({\mathbf A_{i+1}})$, then $(u_i,v_i)\in Z_R$.  If
$u_i$ and $v_i$ are not $R$-clones for $w$, then the pairs involving
$w$ and $u_i$ or $v_i$ in~$R({\mathbf A_{i+1}})$ are copied in~$Z_R$.
Intuitively, $Z_R$ collects the $R$-relations when they just appear
(in the order $\mathbf A_1,\ldots, \mathbf A_n$). We further define
$Z=\bigcup_{R\in\Sigma}Z_R$ and the function
$\tau\colon V(Y)\rightarrow [n]$ by $\tau(v)=n$ if $v\in L(Y)$ and
$\tau(z_i)=i$. Note that for each $i\in [n]$ and non-root vertex $v$
of~$Y$, we have $v\in A_i$ if and only if
$\tau(\pi_Y(v))<i\leq \tau(v)$.

\begin{figure}[h!t]
	\includegraphics[width=.4\columnwidth]{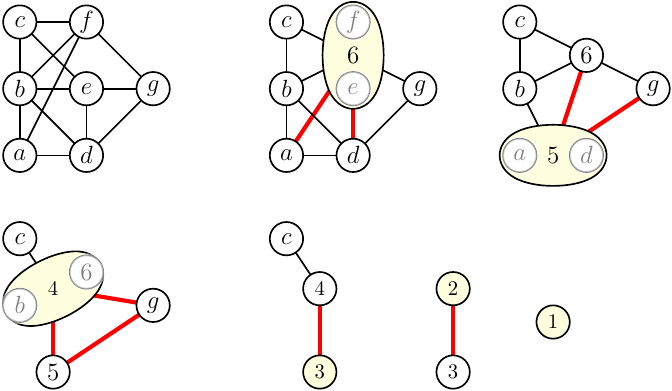}\hfill\includegraphics[width=.55\columnwidth]{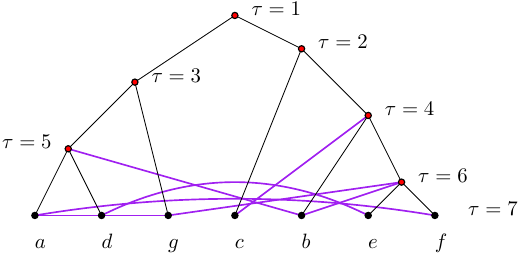}
	\caption{A contraction sequence and the derived ranked tree model.}
	\label{fig:seq2tm}
\end{figure}

\begin{lem}
	\label{lem:seq2tm}
	Every $d$-sequence $\mathbf A_n,\dots,\mathbf A_1$ defines a ranked twin-model with width at most~$d$.
\end{lem}

\begin{proof}
	\begin{claim}
		\label{cl:tau}
		The function $\tau$ satisfies the labeling, monotonicity, and
		synchronicity conditions.
		
	\end{claim}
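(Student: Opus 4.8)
The plan is to verify, one at a time, that $\tau$ satisfies the labeling, monotonicity and synchronicity conditions (the minimality and consistency of the underlying twin-model $(Y,Z_{R_1},\dots,Z_{R_k})$ are separate matters, to be handled elsewhere in the proof of the lemma). The only genuine ingredient is a bookkeeping observation about which vertices of $Y$ lie in which domain $A_i$ of the sequence, so I would establish that first. We may assume the vertices $z_1,\dots,z_{n-1}$ created along the sequence are pairwise distinct and distinct from the leaves $A_n=L(Y)$, so that $V(Y)=L(Y)\cup\{z_1,\dots,z_{n-1}\}$ is a disjoint union.

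First I would record two elementary ``lifetime'' facts, using only that $z_j$ first occurs in $\mathbf A_j$ (it is the vertex produced by the contraction $\mathbf A_{j+1}\to\mathbf A_j$) and that a vertex deleted by a contraction never reappears (all later vertices are fresh): (i) for an internal vertex $z_q$, $z_q\in A_i$ implies $i\le q$; and (ii) for a non-root vertex $v$, writing $\pi_Y(v)=z_p$, the vertex $v$ is one of the two vertices contracted at step $p$, so $v\in A_i$ implies $i\ge p+1>\tau(\pi_Y(v))$. Since $\tau(z_q)=q$ and $\tau\equiv n$ on the leaves, (i) and (ii) yield, for every non-root $v$ and every $i$,
\[
  v\in A_i \;\Longrightarrow\; \tau(\pi_Y(v))<i\le\tau(v),
\]
which is the ``note'' stated just above the lemma (its converse, not needed here, holds because $v$ occupies a contiguous block of the $A_i$'s).

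Labeling is then immediate: $\tau\equiv n$ on $L(Y)$ by definition, $I(Y)=\{z_1,\dots,z_{n-1}\}$, and $\tau(z_i)=i$, so $\tau$ restricted to $I(Y)$ is a bijection onto $[n-1]$. For monotonicity it suffices to check $\tau(\pi_Y(v))<\tau(v)$ for every non-root $v$ and then iterate along ancestor chains; and every non-root $v$ does belong to $A_{\tau(v)}$ (to $A_q$ if $v=z_q$, to $A_n$ if $v$ is a leaf), so the displayed implication gives exactly this.

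For synchronicity I would exploit the uniform shape of the pairs in $Z_R$: each $(a,b)\in Z_R$ is put in at some step $j\in[n-1]$, with $a,b\in A_{j+1}$ and with at least one of $a,b$ a child of $z_j$ in $Y$ — either $(a,b)=(u_j,v_j)$ (both children of $z_j$), or $\{a,b\}=\{w,c\}$ with $c\in\{u_j,v_j\}$ and $u_j,v_j$ not $R$-clones for $w$. Say $\pi_Y(a)=z_j$, so $\tau(\pi_Y(a))=j$; as $j+1\ge 2$ and the root sits only in $A_1$, both $a$ and $b$ are non-root, so the displayed implication at $i=j+1$ gives $\tau(\pi_Y(b))<j+1\le\tau(b)$ and $j+1\le\tau(a)$. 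Therefore $\max(\tau(\pi_Y(a)),\tau(\pi_Y(b)))=j<j+1\le\min(\tau(a),\tau(b))$, which is synchronicity. I do not foresee a real obstacle: once the lifetime facts are in hand everything is routine, and the only point that needs care is that the list $\mathbf A_n,\dots,\mathbf A_1$ is indexed opposite to the order in which contractions are actually performed, so that ``born at step $j$'' means ``first appears in $\mathbf A_j$''.
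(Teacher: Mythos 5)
Your proof is correct and follows essentially the same route as the paper: the paper's proof also rests on the ``lifetime'' observation that $v\in A_i$ iff $\tau(\pi_Y(v))<i\leq\tau(v)$ (stated as a note just before \cref{lem:seq2tm}), dismisses labeling and monotonicity as straightforward, and derives synchronicity by noting that both members of a $Z_R$-pair are simultaneously present in some $\mathbf A_i$. You are a bit more careful about the indexing (pinning down that the relevant time is $j+1$ when the pair is created at contraction step $j$, and that one endpoint is a child of $z_j$), which tightens a minor off-by-one slip in the paper's phrasing, but the underlying argument is the same.
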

	\begin{claimproof}
		The first two conditions are straightforward. Let \mbox{$(u,v)\in Z_R$}.
		Let $i\in[n-1]$ be such that $(u,v)$ appears in $\mathbf A_i$ for
		the first time. As $u,v\in A_i$ we have both
		$\tau(\pi_Y(u))<i\leq \tau(u)$ and $\tau(\pi_Y(v))<i\leq \tau(v)$,
		i.e., the synchronicity condition holds.
	\end{claimproof}
	
	\begin{claim}
		\label{cl:Z}
		The relations $Z_R$ ($R\in \Sigma$) satisfy the minimality and
		consistency conditions.
		
	\end{claim}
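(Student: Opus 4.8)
The two conditions are of quite different natures and I would establish them separately.

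\emph{Consistency.} I would prove this using only the monotonicity and synchronicity of $\tau$ from \cref{cl:tau}; the contraction sequence plays no role. Suppose a closed walk $\gamma$ in $Y\cup\bigcup_i Z_{R_i}$ has all its $Y$-edges traversed away from the root but no two consecutive edges in $\bigcup_i Z_{R_i}$. Since $Y$ is acyclic, $\gamma$ uses at least one $Z$-edge, so (reading $\gamma$ cyclically) we may write it as $P_1e_1P_2e_2\cdots P_se_s$ with $s\geq 1$, each $e_i$ a $Z$-edge and each $P_i$ a nonempty maximal block of $Y$-edges; traversed away from the root, $P_i$ runs monotonically down from a node $a_i$ to a descendant $b_i$ (so $a_i\prec_Y b_i$), and $e_i=\{b_i,a_{i+1}\}$ with indices modulo $s$. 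As $e_i$ lies in some $Z_{R_j}$, synchronicity gives $\tau(\pi_Y(b_i))<\tau(a_{i+1})$, while $a_i\prec_Y b_i$ gives $a_i\preceq_Y\pi_Y(b_i)$ and hence $\tau(a_i)\leq\tau(\pi_Y(b_i))$. Therefore $\tau(a_1)<\tau(a_2)<\cdots<\tau(a_s)<\tau(a_1)$, a contradiction. (Here $b_i\neq r(Y)$ since $a_i\prec_Y b_i$, so $\pi_Y(b_i)$ is defined; and the same synchronicity--monotonicity estimate shows that every pair in some $Z_{R_j}$ consists of two distinct $\preceq_Y$-incomparable vertices, a fact I use below.)

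\emph{Minimality.} Here I would use the contraction sequence, through a \emph{persistence} statement: if $(x,y)\in R(\mathbf A_j)$, $x\preceq_Y x'$, $y\preceq_Y y'$, and $x',y'\in A_{j'}$ for some $j'\geq j$, then $(x',y')\in R(\mathbf A_{j'})$. This should follow by a short induction on $j'-j$: each $A_k$ is a $\preceq_Y$-antichain (which kills the degenerate configurations), a black $R$-edge avoiding the split vertex $z_{j'-1}$ survives the split, and a black $R$-edge incident to $z_{j'-1}$ passes to both of its children because a black endpoint forces those two children to be $R$-clones for the other endpoint. From persistence together with the bookkeeping in the definition of $Z_R$, I would deduce by induction on $k$ the ``local'' form of the fact that the twin-model defines $\mathbf A$: for every $k$,
\[
  R(\mathbf A_k)=\{(x,y)\in A_k\times A_k:\ \exists\,(u',v')\in Z_R,\ u'\preceq_Y x\ \text{and}\ v'\preceq_Y y\}.
\]
(Inclusion $\supseteq$ is persistence, using that a pair of $Z_R$ is an $R$-edge of $\mathbf A_m$ from the step $m$ at which its two endpoints first coexist; $\subseteq$ is because a black $R$-edge of $\mathbf A_k$ either descends from one of $\mathbf A_{k-1}$, handled by induction, or is a pair explicitly added to $Z_R$ at step $k-1$.)

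Now suppose $(u,v)\neq(u',v')$ are both in $Z_R$ with $u'\preceq_Y u$ and $v'\preceq_Y v$; at least one inequality is strict. Put $i=\max(\tau(\pi_Y(u)),\tau(\pi_Y(v)))$, the only step at which $(u,v)$ could have been put into $Z_R$. If $\pi_Y(u)=\pi_Y(v)=:z$, then whichever of $u',v'$ is a proper ancestor of its target is $\preceq_Y z$; if both were, they would be $\preceq_Y$-comparable, contradicting incomparability of $Z_R$-pairs, so one of them — say $v'$ — equals $v$, and then synchronicity of $(u',v)$ forces $\tau(z)<\tau(u')$ whereas $u'\preceq_Y z$ gives $\tau(u')\leq\tau(z)$. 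If $\pi_Y(u)\neq\pi_Y(v)$, let $p\in\{u,v\}$ be the one with $\tau(\pi_Y(p))=i$, set $z=z_i=\pi_Y(p)$, and let $w$ be the other endpoint; since $(u,v)$ was added at step $i$, the two children of $z$ are not $R$-clones for $w$, so $R^\ast$ holds between $z$ and $w$ in $\mathbf A_i$, whence $(z,w),(w,z)\notin R(\mathbf A_i)$, while $w\in A_i$. If the coordinate of $(u',v')$ on the $z$-side is a proper ancestor of its target (hence $\preceq_Y z$), then the displayed description of $R(\mathbf A_i)$, applied to the witness $(u',v')$, puts $(z,w)$ or $(w,z)$ into $R(\mathbf A_i)$ — impossible. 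Otherwise that coordinate equals its target, the other coordinate $c$ is a proper ancestor of its target $w$, and synchronicity of $(u',v')$ gives $\tau(\pi_Y(p))<\tau(c)$, while $c\preceq_Y\pi_Y(w)$ gives $\tau(c)\leq\tau(\pi_Y(w))<\tau(\pi_Y(p))$ — again impossible.

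I expect the bulk of the work to be in minimality: getting persistence and the displayed reconstruction of $R(\mathbf A_k)$ right, and then navigating the final case analysis, where one has to be careful that the relations need not be symmetric and that failure of the $R$-clone condition can come from a pre-existing red edge rather than from disagreeing black edges. Consistency, depending only on the ranking $\tau$, should be short.
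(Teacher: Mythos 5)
Your proof is correct, but the two halves are treated quite asymmetrically compared to the paper. For \emph{consistency}, your argument is in essence the paper's argument: the paper packages the rank-increasing observation into an auxiliary oriented graph $\vec H$ (orient $Y$ from the root, and for each $(u,v)\in Z_R$ add arcs $\pi_Y(u)v$ and $\pi_Y(v)u$), notes that every arc $(x,y)$ of $\vec H$ has $\tau(x)<\tau(y)$ (monotonicity for tree arcs, synchronicity for the added arcs), and then shows a bad cycle collapses to a circuit in $\vec H$. You do the same thing without $\vec H$, directly threading the inequality $\tau(a_i)\leq\tau(\pi_Y(b_i))<\tau(a_{i+1})$ around the cycle; both rely on exactly Claim~\ref{cl:tau}, and your version is, if anything, slightly more self-contained.

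For \emph{minimality}, the paper dismisses the condition in one sentence (``follows directly from the definition''), whereas you supply a full argument through a persistence lemma, a reconstruction formula $R(\mathbf A_k)=\{(x,y)\in A_k^2:\exists (u',v')\in Z_R,\ u'\preceq_Y x,\ v'\preceq_Y y\}$, and a case analysis on whether $\pi_Y(u)=\pi_Y(v)$. Your argument is correct (I checked the cases: in the $\pi_Y(u)=\pi_Y(v)$ case the incomparability of $Z_R$-pairs plus synchronicity give the contradiction; in the other case you correctly invoke the red $R^\ast$-edge created at step $i$ and then either the reconstruction formula or, in the sub-case where the $z$-side coordinate is unchanged, a pure rank comparison). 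The same conclusion can be reached a bit more cheaply via the observation that red $R^\ast$-edges propagate forward under contraction: once $(u,v)$ dies at step $i$, its images $(\phi_j(u),\phi_j(v))$ for $j\le i$ are red or collapsed, so a black pair $(u',v')$ at step $i'+1\le i$ that is the image of $(u,v)$ is impossible. But both routes hinge on the same nontrivial persistence/propagation fact, which is not literally ``direct from the definition''; you were right to flag that the minimality condition is where the real bookkeeping lives, and to be wary of the asymmetry of the relations and of red edges caused by pre-existing $R^\ast$-pairs rather than disagreeing black edges.

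One minor note: your claim that ``every pair in some $Z_{R_j}$ consists of two distinct $\preceq_Y$-incomparable vertices'' gets incomparability from synchronicity-plus-monotonicity, but distinctness does not; it follows instead from the construction of $Z_R$ (each pair has one coordinate in $\{u_i,v_i\}$ and the other either its sibling or a distinct $w\in A_i$). This does not affect the correctness of the argument.
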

	\begin{claimproof}
		The minimality condition follows directly from the definition.  Let
		$\vec H$ be the oriented graph obtained from $Y$ by orienting all
		the edges from the root and adding, for each $R\in\Sigma$ and each pair $(u,v)\in Z_R$ the
		arcs $\pi_Y(u)v$ and $\pi_Y(v)u$ whenever they do not exist. It
		follows from the monotonicity and synchronicity conditions that
		$\vec H$ is acyclically oriented.  Indeed, any arc $(x,y)$ in
		$\vec H$ satisfies $\tau(x) < \tau(y)$.
		
		Assume towards a contradiction that in $Y\cup \bigcup_{R\in\Sigma}Z_R$ one can find a
		cycle $\gamma$ such that the orientation of the $Y$-edges is consistent with a traversal of $\gamma$ and $\gamma$ does not 
		contain two consecutive edges in~$\bigcup_{R\in\Sigma}Z_R$. By replacing in $\gamma$
		each group formed by an edge in~$\bigcup_{R\in\Sigma}Z_R$ and its preceding edge in $\gamma$ (which is in $Y$)
       by the corresponding arc in $\vec H$ we obtain a circuit in
		$\vec H$, contradicting its acyclicity. Hence,~the relations $Z_R$ satisfy the
		consistency condition.
	\end{claimproof}

		From the definition of the width of a ranked twin-model, it is then immediate  that	the ranked twin-model $(Y,Z_{R_1},\dots,Z_{R_k},\tau)$ derived from a $d$-sequence $\mathbf A_n,\dots,\mathbf A_1$ has width at most $d$. This ends the proof of the lemma.
\end{proof}

\subsection{Properties of twin-models}

In this section, we establish two properties of twin-models. The first one is the equality of the minimum width of a twin-model with the twin-width of a structure; the second one is that twin-models of structures with bounded twin-width have degenerate Gaifman graphs.

\begin{restatable}{lem}{Ltwwtm}
	\label{lem:twwtm}
	Every twin-model has a ranking, and 
	the twin-width of a $\Sigma$-structure $\mathbf A$ is the minimum
	width of a ranked twin-model of $\mathbf A$.
\end{restatable}
\begin{proof}
		We first prove the first part of the statement.
	
	\begin{claim}
		Every twin-model has a ranking.
	\end{claim}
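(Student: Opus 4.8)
Every twin-model $(Y,Z_{R_1},\dots,Z_{R_k})$ of a $\Sigma$-structure $\mathbf A$ (with $|A|=n$) admits a ranking $\tau$, i.e.\ a map $V(Y)\to[n]$ satisfying the labeling, monotonicity, and synchronicity conditions.

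The plan is to build $\tau$ greedily by processing the internal nodes of $Y$ in a carefully chosen order, assigning them the labels $1,2,\dots,n-1$ in that order (and $n$ to all leaves). Concretely, I would define an auxiliary directed graph $\vec H$ on $V(Y)$: take all tree edges of $Y$ oriented away from the root, and, for each pair $(u,v)\in Z_{R_i}$, add the arcs $\pi_Y(u)\to v$ and $\pi_Y(v)\to u$ (if not already present). The point is that any topological order of $\vec H$, restricted to the internal nodes and composed with the order-isomorphism to $[n-1]$, will give a valid ranking: an arc $\pi_Y(u)\to v$ forces $\tau(\pi_Y(u))<\tau(v)$, which together with the tree-edge arcs gives exactly monotonicity (from $u\prec_Y v$) and synchronicity (from $(u,v)\in Z_{R_i}$, we get $\tau(\pi_Y(u))<\tau(v)$ and $\tau(\pi_Y(v))<\tau(u)$, hence $\max(\tau(\pi_Y(u)),\tau(\pi_Y(v)))<\min(\tau(u),\tau(v))$). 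The labeling condition holds by construction since we assign $[n-1]$ bijectively to $I(Y)$ and $n$ to every leaf (note: we must also check that every internal node comes before every leaf in the topological order, which follows because leaves are sinks of $\vec H$ — no arc leaves a leaf, as leaves are never of the form $\pi_Y(\cdot)$ and never the root).

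So the whole statement reduces to: \emph{$\vec H$ is acyclic}. This is where the consistency condition of the twin-model is exactly what is needed, and I expect this to be the crux of the argument. Suppose $\vec H$ has a directed cycle $C$. Each arc of $C$ is either a tree edge of $Y$ (oriented away from the root) or a ``shortcut'' arc $\pi_Y(u)\to v$ coming from some $(u,v)\in Z_{R_i}$. Replace each shortcut arc $\pi_Y(u)\to v$ by the length-two path $\pi_Y(u)\to u$ (a tree edge, oriented away from the root, since $u$ is a child of $\pi_Y(u)$ and lies below it) followed by the $Z_{R_i}$-edge $\{u,v\}$, traversed from $u$ to $v$. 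Doing this to every shortcut arc of $C$ produces a closed walk $\gamma$ in the undirected graph $Y\cup\bigcup_i Z_{R_i}$ in which every $Y$-edge is traversed away from the root. By the consistency condition, $\gamma$ must contain two consecutive edges of $\bigcup_i Z_{R_i}$. But in our expansion, every $Z_{R_i}$-edge $\{u,v\}$ is immediately preceded by the tree edge $\pi_Y(u)\to u$, so no two $Z_{R_i}$-edges can be consecutive — a contradiction. (One should double-check the degenerate case where $C$ consists entirely of tree edges: then $C$ is a directed cycle using only edges oriented away from the root, which is impossible in a tree since tree-depth strictly increases along such arcs; alternatively this is subsumed by noting $\tau$-like depth increases along every arc of $\vec H$, but we cannot invoke $\tau$ yet, so argue directly via distance from the root.)

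The main obstacle is getting the edge-by-edge bookkeeping in the cycle-expansion argument exactly right — in particular making sure the expanded walk $\gamma$ genuinely has all its $Y$-edges oriented away from the root (this is why we use $\pi_Y(u)\to u$ rather than, say, an arc pointing toward the root), and that consecutive edges in $\gamma$ across the ``seam'' of the expansion (the tree edge entering a node $v$ from the original cycle $C$, then the tree edge $\pi_Y(u)\to u$ of the next expanded block) do not accidentally create the forbidden pattern or break the walk structure; here one has to be slightly careful about whether a vertex $v$ that is the head of a shortcut arc is itself $\pi_Y(u')$ for the next arc, but since the next arc in $\gamma$ after reaching $v$ is either an original tree edge out of $v$ or (after expansion) the tree edge $\pi_Y(u')\to u'$ with $v=\pi_Y(u')$, in all cases the edge following a $Z$-edge in $\gamma$ is a $Y$-edge, so the no-two-consecutive-$Z$-edges property is preserved. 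Once acyclicity is established, the rest is the routine topological-sort argument sketched above, and I would present it as a short concluding paragraph.
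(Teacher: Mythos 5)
Your construction of the auxiliary directed graph $\vec H$, the reduction of the claim to acyclicity of $\vec H$, and the cycle-to-walk expansion are exactly the ones in the paper's proof, and the endgame by topological sort is also the same. However, there is a genuine gap at the crucial step. After expanding each shortcut arc $\pi_Y(u)\to v$ of the directed cycle $C$ into the two-edge path $\pi_Y(u)\to u, \{u,v\}$, what you obtain in $Y\cup\bigcup_i Z_{R_i}$ is a \emph{closed walk} $\gamma$, not necessarily a \emph{cycle}: the inserted intermediate vertex $u$ may already lie on $C$ (for instance, if $C$ also contains an arc $\pi_Y(u')\to u$), so $\gamma$ can repeat vertices. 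The consistency axiom, as stated in the definition of a twin-model, applies only to cycles, so you cannot invoke it directly on $\gamma$. You write ``By the consistency condition, $\gamma$ must contain two consecutive edges of $\bigcup_i Z_{R_i}$,'' but this is unwarranted for a mere closed walk.

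The paper closes this gap with an additional argument: take a \emph{shortest} closed walk $W$ with the two properties (all $Y$-edges away from the root, no two consecutive $Z$-edges) and show that $W$ must already be a cycle. This step is not immediate, because the naive extraction of a cycle from a closed walk can destroy the ``no two consecutive $Z$-edges'' property at the splice point; the paper handles this by showing that if the extracted cycle has two consecutive $Z$-edges, they must straddle the splice, which then forces the complementary shorter closed walk to retain both properties, contradicting minimality of $W$. You do flag ``edge-by-edge bookkeeping'' as the main obstacle, but the care you describe is about the no-two-consecutive-$Z$-edges property of $\gamma$ as a walk, not about the closed-walk-versus-cycle distinction; the latter is the step actually missing. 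With that lemma added, your argument matches the paper's.
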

	\begin{claimproof}
		Consider the oriented graph $\vec H$ obtained from orienting $Y$
		from the root and adding, for each  $R\in\Sigma$ and each pair $(u,v)\in Z_R$, 
		an arc
		$\pi(u)v$ and an arc $\pi(v)u$ (whenever they do not exist). Assume
		for contradiction that $\vec H$ contains a directed cycle. Replace
		each arc of the form $\pi(u)v$ of this directed cycle (with $(u,v)\in Z_R$)
		by the path $(\pi(u)u, uv)$ in the twin-model. This way we obtain a
		closed walk in $Y\cup \bigcup_{R\in\Sigma} Z_R$ traversing all edges of $Y$ away 
		from the root and no two consecutive edges are in $\bigcup_{R\in\Sigma}Z_R$. We show that we can also find a directed cycle  in $Y\cup \bigcup_{R\in\Sigma}Z_R$
		 with this 
		property, contradicting the
		consistency assumption. 
		Consider a shortest closed walk $W=(e_1,\dots,e_m)$ with the above property and assume this closed walk is not a directed cycle. Without loss of generality we can assume that $(e_1,\dots,e_k)$ forms a cycle~$\gamma$ (starting the closed 
		walk at another point if necessary). By minimality of the closed walk, the cycle $\gamma$ contains two consecutive edges in  $\bigcup_{R\in\Sigma}Z_R$. These edges are the edges~$e_k$ and~$e_1$ (as otherwise they would be consecutive in $W$ as well). It follows that~$e_{k+1}$ does not belong to  $\bigcup_{R\in\Sigma}Z_R$ (as it follows $e_k$ in the~$W$). The closed walk $W'=(e_{k+1},\dots,e_n)$ does not  have two consecutive edges in  $\bigcup_{R\in\Sigma}Z_R$ as all the consecutive pairs are consecutive in $W$, except the pair $(e_n,e_{k+1})$ (and we know $e_{k+1}\notin\bigcup_{R\in\Sigma}Z_R$). This contradicts the minimality of~$W$.
		Thus, $\vec H$ is acyclic and a topological
		ordering of $\vec H[I(Y)]$ extends to a labeling
		$\tau\colon V(H)\rightarrow[n]$ that is bijective between~$I(Y)$ and
		$[n-1]$, equal to $n$ on~$L(Y)$, and increasing with respect to
		every arc of $\vec H$. This directly implies both monotonicity
		and synchronicity.
		
		\vspace{-4mm}
	\end{claimproof}

	The
	following claim, which asserts that no $Z_{R_i}$ ``crosses'' the
	boundaries, will be quite helpful.
	\begin{claim}
		\label{cl:nocross}
		Let $t\in [n-1]$ and let $u,v\in\partial_t Y$. Then there exists no
		pair $(u',v')\in Z_{R_i}$ with $u'\prec_Y u$ and $v'\succ_Y v$.
	\end{claim}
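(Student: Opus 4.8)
The plan is to argue by contradiction, converting the two ancestor hypotheses into numerical constraints on the ranking $\tau$ and then clashing with the synchronicity condition. So fix $t\in[n-1]$, vertices $u,v\in\partial_t Y$, and suppose towards a contradiction that for some $i$ there is a pair $(u',v')\in Z_{R_i}$ with $u'\prec_Y u$ and $v'\succ_Y v$.

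The first step is to bound $\tau(u')$ from above. Since $u'$ is a (strict) ancestor of $u$, it lies on the path from $r(Y)$ to $u$ strictly above $u$, and $\pi_Y(u)$ is the last vertex before $u$ on that path; hence $u'\preceq_Y\pi_Y(u)$, so by monotonicity $\tau(u')\le\tau(\pi_Y(u))$, and since $u\in\partial_t Y$ we get $\tau(u')\le\tau(\pi_Y(u))<t$. The second step is to bound $\tau(\pi_Y(v'))$ from below. Here $v$ is a (strict) ancestor of $v'$, so $v'\ne r(Y)$ and $v\preceq_Y\pi_Y(v')$; by monotonicity $\tau(v)\le\tau(\pi_Y(v'))$, and since $v\in\partial_t Y$ we get $t\le\tau(v)\le\tau(\pi_Y(v'))$.

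The third step is to apply the synchronicity condition to the pair $(u',v')\in Z_{R_i}$ (both parents occurring in that condition are defined, as no $Z_{R_i}$-pair can contain the root $r(Y)$, whose rank $1$ is already incompatible with synchronicity): it yields $\tau(\pi_Y(v'))\le\max(\tau(\pi_Y(u')),\tau(\pi_Y(v')))<\min(\tau(u'),\tau(v'))\le\tau(u')$. Combining this with the first step gives $\tau(\pi_Y(v'))<\tau(u')<t$, which contradicts the bound $t\le\tau(\pi_Y(v'))$ obtained in the second step. This completes the proof of the claim.

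I do not anticipate any genuine difficulty here: the statement is an immediate consequence of the monotonicity and synchronicity axioms of a ranking, which is precisely why it can open the proof of \cref{lem:twwtm}. The only care needed is notational: the translation of $u'\prec_Y u$ must pass through $\pi_Y(u)$ with a \emph{non-strict} inequality $u'\preceq_Y\pi_Y(u)$ (and likewise $v\preceq_Y\pi_Y(v')$), and one should observe that for $t=1$ the claim holds vacuously, since $\partial_1 Y=\{r(Y)\}$ admits no strict ancestor.
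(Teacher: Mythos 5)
Your proof is correct and takes essentially the same route as the paper's: translate $u'\prec_Y u$ into $\tau(u')\leq\tau(\pi_Y(u))<t$ and $v'\succ_Y v$ into $\tau(\pi_Y(v'))\geq\tau(v)\geq t$, then derive the contradiction from synchronicity, which forces $\tau(\pi_Y(v'))<\tau(u')$. The extra care you take about $t=1$ and about $\pi_Y$ being defined is sound but not spelled out in the paper's (one-line) proof.
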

	\begin{claimproof}
		Assume $(u',v')\in Z_{R_i}$ and $v'\succ_Y v$. By the synchronicity
		property we have $\tau(u')>\tau(\pi_Y(v'))\geq \tau(v)\geq t$,
		contradicting $\tau(u')\leq \tau(\pi_Y(u))<t$.
	\end{claimproof}
	
	For a $\Sigma^\ast$-structure $\mathbf A$ and $R\in\Sigma$ we define
	\[
	\overline{R}(\mathbf A)=\{(u,v)\in A^2:\{(u,v),(v,u)\}\cap
	(R(\mathbf A)\cup R^\ast(\mathbf A))\neq \emptyset\}.
	\]
	
	\begin{claim}
		\label{cl:l2s}
		Let $\mathbf L_1,\dots,\mathbf L_n$ be the layers of a ranked 
		twin-model~$\mathfrak T$ of a $\Sigma$-structure $\mathbf A$.  Then there
		exists a contraction sequence $\mathbf A_n,\dots,\mathbf A_1$ of
		$\mathbf A$ with $A_i=L_i$, and, for each $R\in\Sigma$, \linebreak
		\mbox{$\overline{R}(\mathbf A_i)=\overline{R}(\mathbf L_i)$},
		$R(\mathbf L_i)\subseteq R(\mathbf A_i)$ and
		$R^\ast(\mathbf L_i)\supseteq R^\ast(\mathbf A_i)$.
	\end{claim}
	\begin{claimproof}
		For $i\in [n-1]$, the $\Sigma$-structure $\mathbf A_{i}$ is obtained
		from~$\mathbf A_{i+1}$ by contracting the pair of vertices
		$u_i, v_i$ into $w_i$, where $w_i$ is the vertex of $Y$ with
		$\tau(w_i)=i$ and $u_i$ and $v_i$ are the two children of $w_i$ in
		$Y$. It is easily checked that $A_i=L_i$. Let $z$ be a vertex of
		$A_i$ different from~$w_i$. Then
		$(z,w_i)\in \overline{R}(\mathbf A_i)$ if and only if there exists a leaf
		$w'\succeq_Y w_i$ and a leaf $z'\succeq_Y z$ such that
		$\{(w',z'),(z',w')\}\cap R(\mathbf A)\neq\emptyset$. As
		$\mathfrak T$ is a twin-model of~$\mathbf A$ this is equivalent 
		to the fact that there
		exists $w''\preceq_Y w'$ and $z''\preceq_Y z'$ with
		$(w'',z'')\in Z_R$ or \mbox{$(z'',w'')\in Z_R$}.  As~$\preceq_Y$ is a
		tree order, $w_i$ and $w''$ are comparable, as well as $z$ and
		$z''$. From this and Claim~\ref{cl:nocross} it follows that 
		$(z,w_i)\in \overline{R}(\mathbf{A}_i)\iff
		\{(z,w_i),(w_i,z)\}\subseteq \overline{R}(\mathbf L_i)\iff 
		(z,w_i)\in \overline{R}(\mathbf{L}_i)$, thus
		$\overline{R}(\mathbf{A}_i)=\overline{R}(\mathbf{L}_i)$. 
		
		We now prove $R(\mathbf L_i)\subseteq R(\mathbf A_i)$ by reverse
		induction on $i$. For $i=n$ we have
		$R(\mathbf L_i)=R(\mathbf A_i)=R(\mathbf A)$. Let $i\in [n-1]$ and
		let $u_i,v_i,w_i$ be defined as above.  If
		$(w_i,z)\in R(\mathbf L_i)$, then there exists $w'\preceq_Y w_i$ and
		$z'\preceq_Y z$ with $(w',z')\in Z_R$ thus we have also
		$(u_i,z)\in R(\mathbf L_{i+1})$ and $(v_i,z)\in R(\mathbf L_{i+1})$.
		By induction, we deduce $(u_i,z)\in R(\mathbf A_{i+1})$ and
		$(v_i,z)\in R(\mathbf A_{i+1})$.  Similarly, if
		$(z,w_i)\in R(\mathbf L_i)$, then $(z,u_i)\in R(\mathbf A_{i+1})$
		and $(z,v_i)\in R(\mathbf A_{i+1})$.  Thus, $u_i$ and~$v_i$ are
		$R$-clones for $z$, hence, if $(w_i,z)\in R(\mathbf L_i)$, then
		$(w_i,z)\in R(\mathbf A_i)$ and if $(z,w_i)\in R(\mathbf L_i)$, then
		\mbox{$(z,w_i)\in R(\mathbf A_i)$}.  It follows that we have
		$R(\mathbf L_i)\subseteq R(\mathbf A_i)$. Thus, we have
		\begin{align*}
			R^\ast(\mathbf L_i)&=\overline R(\mathbf L_i)\setminus\bigl\{(u,v): \{(u,v),(v,u)\}\cap R(\mathbf L_i)=\emptyset\bigr\}\\
			&\supseteq \overline R(\mathbf A_i)\setminus\bigl\{(u,v): \{(u,v),(v,u)\}\cap R(\mathbf A_i)=\emptyset\bigr\} \\
			& =R^\ast(\mathbf A_i).\tag*{\qedhere}
		\end{align*}
		
	\end{claimproof}

	We are now able to complete the proof of the lemma.
	According to \Cref{lem:seq2tm}, every
	$d$-sequence for $\mathbf A$ defines a ranked twin-model with 
	width at most $d$. Conversely, every ranked twin-model for
	$\mathbf A$ with width $d'$ defines a sequence of layers
	$\mathbf L_t$ with 
	$\max_{v\in L_t} \sum_{R_i\in\Sigma}|R_i^\ast(\mathbf L_t,v)|$ $\leq
	d'$ and, by Claim~\ref{cl:l2s}, a $d'$-sequence for $\mathbf A$.
\end{proof}

\Cref{lem:twwtm} allows introducing the following terminology: the \emph{width} of a twin-model $(Y,Z_{R_1},\dots,Z_{R_k})$ is the minimum width of a ranking of $(Y,Z_{R_1},\dots,Z_{R_k})$.
A twin-model of a \mbox{$\Sigma$-structure} $\mathbf A$ is \emph{optimal} if it has the minimum possible width as a twin-model of  $\mathbf A$, which is the twin-width of $\mathbf A$.

\begin{defi}[The class $\mathscr T$]
	\label{def:clT}
	The class $\mathscr T$ is the class of all optimal twin-models of the $\Sigma$-structures in $\mathscr C_0$.
\end{defi}

\pagebreak
The following easy remark will be useful.
\begin{claim}
	\label{cl:indtm}
	Let $\mathfrak Y=(Y,Z_{R_1},\dots,Z_{R_k},\tau)$ be a ranked
	twin-model of a $\Sigma$-structure $\mathbf A$ (with domain $A$) and let $X\subseteq A$.
	Let $Y'$ be the subtree of $Y$ induced by all the vertices in~$X$ and
	their pairwise least common ancestors in $Y$, let $Z_{R_i}'$ be the subset of all pairs in
	$Z_{R_i}\cap (Y'\times Y')$, and let $\tau'$ be the mapping from~$Y'$ to
	$[|X|]$ such that for all $x,y \in V(Y')$ we have
	\mbox{$\tau(x)<\tau(y)\iff \tau'(x)<\tau'(y)$}. Then
	$\mathfrak Y'=(Y',Z_{R_1}',\dots,Z_{R_k}',\tau')$ is a ranked
	twin-model of~$\mathbf A[X]$, whose width is not larger than the one
	of $\mathfrak Y$. \hfill $\vartriangleleft$   
\end{claim}

\begin{lem}
	\label{lem:degTM}
	The Gaifman graph of a ranked twin-model of a $\Sigma$-structure with width
	$d$ is $d+2$-degenerate.
\end{lem}
\begin{proof}
	Let $\mathbf A=(A,R_1(\mathbf A),\dots,R_k(\mathbf A))$ be a
	$\Sigma$-structure, let $\mathfrak T=(Y,Z_{R_1},\dots,Z_{R_k},\tau)$
	be a ranked twin-model of $\mathbf A$ with width $d$, and let $G$ be
	the Gaifman graph of $(Y,Z_{R_1},\dots,Z_{R_k})$. The ranked
	twin-model $\mathfrak T$ (with layers $\mathbf L_i$) defines a
	$d$-se\-quence $\mathbf A_n,\dots,\mathbf A_1$, where $A_i=L_i$ (see
	\Cref{lem:twwtm}).  Let $z$ be the node with $\tau(z)=n-1$ and let
	$u$ and $v$ be its children. Each pair in $Z_{R_i}$ containing $u$
	(except pairs containing  both $u$ and $v$) gives rise (in
	$\mathbf A_{n-1}$) to an $R_i^*$-edge incident to $z$ when
	contracting $u$ and $v$. Thus, the degree of $u$ in $G$ is at most
	$d+2$ ($d$ for the sum of the degrees in the relations $R_i^*$,
	$1$~for the pair~$(u,v)$ in at least one $Z_{R_i}$, and~$1$ for the tree edge
	$(u,z)$). Then, in $G-u$, the vertex $v$ has  degree at most
	$d+1< d+2$. Now note that by removing $u$ and $v$ from~$Y$, and
	redefining $\tau(x)$ as $\min(n-1,\tau(x))$, we get a ranked
	twin-model of $\mathbf A_{n-1}$ (minus $R_i^*$-edges) with width at
	most $d$, whose Gaifman graph is $G-u-v$. By induction, we deduce
	that $G$ is $d+2$-degenerate.
\end{proof}

\section{Full twin-models}
\label{sec:ftm}
To reconstruct a $\Sigma$-structure $\mathbf A$ from a twin-model 
$(Y,Z_{R_1},\dots,Z_{R_k})$, we make use of the tree order $\preceq_Y$
defined by $Y$. As this tree order cannot be obtained as a first-order
transduction of $(Y,Z_{R_1},\dots,Z_{R_k})$ it will be convenient to
introduce a variant of twin-models: the \emph{full twin-model}
associated to a twin-model $(Y,Z_{R_1},\dots,Z_{R_k})$ is the
structure $(V(Y),\prec_Y,Z_{R_1},\dots,Z_{R_k})$.  
\begin{defi}[Transduction $\mathsf S$ and the class $\mathscr F$]
	\label{def:tS}
	The transduction $\mathsf S$ is
	the simple interpretation of $\Sigma$-structures in full twin-models defined
	by formulas
	\begin{align*}
		\rho_0(x)&:=\neg (\exists y\ y \succ_Y x);\\
		\rho_{R_i}(x,y)&:=\exists u\,\exists v\ (u\preceq_Y x)\wedge(v\preceq_Y y)\wedge Z_{R_i}(u,v).
	\end{align*}
	
	$\mathscr F$ is the class of all the full twin-models corresponding to the twin-models in $\mathscr T$.
\end{defi}

The following lemma follows directly from the definition of a
twin-model.
\begin{lem}
	\label{cl:S}
	The class $\mathscr C_0$ is a $2$-bounded $\mathsf S$-transduction of the class $\mathscr F$.	
\end{lem}
\begin{proof}
	For all $\mathbf A\in\mathscr F$,
	if $\mathbf T=(T,\prec,Z_{R_1},\dots,Z_{R_k})$ is a full twin-model
	of  $\mathbf A$ then $\mathsf S(\mathbf T)=\mathbf A$ and
	$|T|=2|A|-1$.
\end{proof}

\begin{lem}
	\label{lem:twwftm}
	Let $\mathfrak T=(Y,Z_{R_1},\dots,Z_{R_k},\tau)$ 
	be  a ranked twin-model with associated 
	full twin-model
	$\mathbf T=(V(Y),\prec_Y, Z_{R_1},\dots,Z_{R_k})$.
	Then the twin-width of\, $\mathbf T$ is at most twice the width of the ranked twin-model 
	$(V(Y),\prec_Y,Z_{R_1},\dots,Z_{R_k},\tau)$.
\end{lem}

\begin{proof}
	Let $I_0,I_1$ be copies of $I(Y)$ and let $p_i:I(Y)\rightarrow I_i$ be
	the ``identity'' for $i=0,1$.  We define the binary rooted tree
	$\widehat Y$ with vertex set $V(\widehat Y)=V(Y) \cup I_1 \cup I_0$,
	leaf set $L(\widehat Y) = V(Y)$, root $r(\widehat Y)=p_o(r(Y))$, and
	parent function
	
	\[ \pi_{\widehat Y}(x) =\begin{cases}
		p_1\circ \pi_Y(x)&\text{if }x\in L(Y)\\
		p_0(x)&\text{if }x\in I(Y)\\
		p_0\circ p_1^{-1}(x)&\text{if }x\in I_1\\
		p_1 \circ \pi_Y\circ p_0^{-1}(x)&\text{if }x\in I_0\setminus \{r(\widehat Y)\}\\
		x&\text{if }x=r(\widehat Y)
	\end{cases}\]
	
	An informal description of $\widehat Y$ is that it is
	obtained by replacing every internal node $v$ of~$Y$ by a
	\emph{cherry} $C_v$ (i.e., a complete binary tree on three vertices)
	such that one leaf of~$C_v$ remains a leaf in $\widehat Y$, the other
	leaf of $C_v$ is linked to the ``children of $v$'', while the root of
	$C_v$ is linked to the ``parent of $v$'' (provided $v$ is not the root
	of $Y$). 
	
	We further define $\widehat Z_\prec=\{(v, p_1(v)): v\in I(Y)\}$ and keep the relations $Z_{R_i}$ as they were defined on $\mathfrak T$. See \Cref{fig:TMTM} for an example.

	\begin{figure}[h!t]
		\centering\includegraphics[width=\textwidth]{TMTM}
		\caption{Construction of the twin-model of a twin-model. Each internal vertex of $Y$ defines three vertices of $\widehat Y$. Here $I=\{\alpha,\beta,\dots\}, I_0=\{\alpha_0,\beta_0,\dots\}$, and $I_1=\{\alpha_1,\beta_1,\dots\}$.
		Purple (heavy) edges are in $Z_R$, green (dotted) edges are in $\widehat Z_\prec$. Note that in the twin-model of the twin-model, only leaves are adjacent by edges in $Z_R$.
		}
		\label{fig:TMTM}
	\end{figure}

	\begin{claim}
		$\widehat{\mathbf T}=(\widehat Y,\widehat Z_\prec, Z_{R_1},\dots,Z_{R_k})$ is a
		twin-model of\, $\mathbf T$.
	\end{claim}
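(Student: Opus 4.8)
The plan is to verify the three requirements in the definition of a twin-model of $\mathbf T=(V(Y),\prec_Y,Z_{R_1},\dots,Z_{R_k})$: that $\widehat Y$ is a rooted binary tree with leaf set $V(Y)$ (immediate from the construction recorded in the informal description just above — each internal node $v$ of $Y$ becomes a cherry $\{p_0(v),p_1(v),v\}$, with $v$ a new leaf); that $(\widehat Y,\widehat Z_\prec,Z_{R_1},\dots,Z_{R_k})$ \emph{defines} $\mathbf T$; and that $\widehat Z_\prec,Z_{R_1},\dots,Z_{R_k}$ satisfy minimality and consistency with respect to $\widehat Y$. The single structural fact on which everything rests, which I would state and prove first by induction on the height of $v$ in $Y$, is: for every $v\in I(Y)$, the leaves of $\widehat Y$ lying strictly below $p_1(v)$ are exactly the vertices $w\in V(Y)$ with $v\prec_Y w$. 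This holds because the two children of $p_1(v)$ in $\widehat Y$ are the entry points of the subtrees of $Y$ hanging from the two children $c$ of $v$ — namely $p_0(c)$ if $c\in I(Y)$ and $c$ itself if $c\in L(Y)$ — and $p_0(c)$ has children $p_1(c)$ and the leaf $c$.

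For the ``defines $\mathbf T$'' part I would exploit that every element of $V(Y)=L(\widehat Y)$ is $\preceq_{\widehat Y}$-maximal. A pair of $\widehat Z_\prec$ has the form $(u,p_1(u))$ with $u\in I(Y)\subseteq L(\widehat Y)$, so any $u'\preceq_{\widehat Y}u$ occurring as a first coordinate of a pair of $\widehat Z_\prec$ satisfies $u'=u$; hence the relation interpreted for $\prec$ is $\{(u,w)\in V(Y)^2 : u\in I(Y),\ p_1(u)\preceq_{\widehat Y}w\}$, which by the structural fact equals $\{(u,w):u\prec_Y w\}=\prec_Y$. Similarly every pair of $Z_{R_i}$ already lies in $L(\widehat Y)^2$, so, both coordinates being maximal, the relation it interprets collapses to $Z_{R_i}$ itself, which is $R_i(\mathbf T)$. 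The minimality conditions follow from the same maximality observation: the first coordinate of any pair of $\widehat Z_\prec$, and both coordinates of any pair of $Z_{R_i}$, are leaves of $\widehat Y$, so no distinct witnessing pair can sit strictly below.

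The crux, and what I expect to be the main obstacle, is consistency, which I would establish by contradiction. Suppose $\widehat\gamma$ is a cycle of $\widehat Y\cup\widehat Z_\prec\cup\bigcup_i Z_{R_i}$ carrying a traversal that sends every $\widehat Y$-edge away from the root and uses no two consecutive transversal edges. Then this traversal decomposes cyclically into maximal runs $P_1,\dots,P_\ell$ of tree edges — each, since no two transversal edges are consecutive, a \emph{nonempty} descending path in $\widehat Y$ — separated by single transversal edges $g_1,\dots,g_\ell$, where $g_i$ joins the bottom of $P_i$ to the top of $P_{i+1}$ (indices mod $\ell$). As $P_{i+1}$ is nonempty its top has a proper descendant in $\widehat Y$, hence is not a leaf of $\widehat Y$; since the pairs of each $Z_{R_j}$ join two leaves of $\widehat Y$, it follows that every $g_i$ is a $\widehat Z_\prec$-edge $\{v_i,p_1(v_i)\}$ with $v_i\in I(Y)$, necessarily traversed from the leaf $v_i$ (the bottom of $P_i$) to $p_1(v_i)$ (the non-leaf top of $P_{i+1}$). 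Applying the same reasoning to $g_{i+1}$ shows that the bottom of $P_{i+1}$ is the leaf $v_{i+1}$; and since $P_{i+1}$ descends from $p_1(v_i)$ to $v_{i+1}$, the structural fact gives $v_i\prec_Y v_{i+1}$. Thus $v_i\prec_Y v_{i+1}$ for every $i$ (indices mod $\ell$), which is impossible since $\prec_Y$ is a strict partial order. Hence no such cycle exists and consistency holds — note that this argument used only that $Y$ is a rooted binary tree, not the consistency of the original twin-model.
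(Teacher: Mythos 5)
Your proof is correct, and it follows the same broad strategy as the paper: observe $V(Y)=L(\widehat Y)$, get minimality essentially for free from the fact that the relevant coordinates of pairs in $\widehat Z_\prec$ and $Z_{R_i}$ are leaves of $\widehat Y$ (hence $\preceq_{\widehat Y}$-maximal), and derive consistency by assuming a bad cycle and extracting a contradiction from the structure of $\widehat Y$. Your ``defines $\mathbf T$'' argument via maximality of leaves also matches the paper's.

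Where you diverge is in how carefully the consistency case is closed. The paper's proof asserts that a bad cycle would contain a directed $\widehat Y$-path between two vertices of $L(Y)$, or from a vertex of $I(Y)$ to a distinct vertex of $V(Y)$, and dismisses both as impossible (since these are all leaves of $\widehat Y$). As written, this dichotomy only covers the configurations in which some maximal tree run of the cycle starts at a leaf of $\widehat Y$; it does not obviously address the remaining configuration --- the one you treat explicitly --- where \emph{every} transversal edge of the cycle is a $\widehat Z_\prec$-edge traversed from the leaf $v_i$ to the internal node $p_1(v_i)$, so that every maximal tree run starts in $I_1$ (not at a leaf) and ends at a leaf. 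Your proof closes this last case cleanly: the structural lemma you isolate (the leaves of $\widehat Y$ strictly below $p_1(v)$ are exactly the strict $\prec_Y$-descendants of $v$) shows each such run forces $v_i\prec_Y v_{i+1}$, yielding a cycle $v_1\prec_Y\cdots\prec_Y v_\ell\prec_Y v_1$ in the strict order $\prec_Y$, a contradiction. So your argument is a more explicit and more complete rendering of the paper's; the structural lemma is a useful ingredient the paper leaves implicit, and the run-by-run analysis pinpointing that all $g_i$ must be $\widehat Z_\prec$-edges traversed leaf-to-$I_1$ is the step that makes the case analysis watertight.
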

	\begin{claimproof}
		We have $V(Y)=L(\widehat Y)$.  The minimality conditions are
		obviously satisfied for $\widehat Z_\prec$ and~$Z_{R_i}$.  Let
		$\widehat Z=\widehat Z_\prec\cup\bigcup_i Z_{R_i}$.  Consider a
		cycle $\widehat\gamma$ in $\widehat Y\cup \widehat Z$, with all the
		edges in $\widehat Y$ oriented away from the root. Assume for
		contradiction that no two edges in $\widehat Z$ are consecutive in
		$\widehat\gamma$. Then either $\widehat\gamma$ contains a directed
		path of $\widehat Y$ linking to vertices in $L(Y)$ or a directed
		path of $\widehat Y$ linking a vertex in $I(Y)$ to a distinct vertex
		in $V(Y)$. As no such directed paths exist in $\widehat Y$ we are
		led to a contradiction. Hence,
		$\widehat{\mathbf T}$ satisfies the
		consistency condition.

		In order to complete our proof, we still need to prove that
		$\widehat{\mathbf T}$ is indeed a twin-model of~$\mathbf T$, that is that
		 for every $u,v\in V(Y)$ we have (for all $1\leq i\leq k$)
		\begin{align*}
			(u,v)\in Z_{R_i}  \text{~(in ${\mathbf T}$)}\quad&\iff\quad
				\exists u',v'\in V(\widehat Y)\quad (u'\preceq_{\widehat Y}u)\wedge (v'\preceq_{\widehat{Y}}v)\wedge (u',v')\in Z_{R_i}  \text{~(in $\widehat{\mathbf T}$)}\\
			u\prec_Y v  \text{~(in ${\mathbf T}$)}\quad&\iff\quad
			\exists u',v'\in V(\widehat Y)\quad (u'\preceq_{\widehat Y}u)\wedge (v'\preceq_{\widehat{Y}}v)\wedge (u',v')\in \widehat{Z}_{\prec}  \text{~(in $\widehat{\mathbf T}$)}\\
		\end{align*}
		
		Let $(u,v)\in V(Y)$ and let $1\leq i\leq k$..
		
		As in $\widehat{\mathbf T}$ the edges in $Z_{R_i}$ link only leaves of $\widehat{Y}$, we infer that
		there exists $u'\preceq_{\widehat Y}u$ and
		$v'\preceq_{\widehat Y}v$ with $(u',v')\in Z_{R_i}$ (in $\widehat{\mathbf T}$)
		if and only if
		$(u,v)\in Z_{R_i}$ (in $\mathbf T$); see \Cref{fig:TMTM}.

		As 	 in $\widehat{\mathbf T}$ the edges in $\widehat Z_\prec$ only link some leaf $x$ of $\widehat{Y}$ to $p_1(x)$, we infer that  there exists
		$u'\preceq_{\widehat Y}u$ and $v'\preceq_{\widehat Y}v$ with
		$(u',v')\in\widehat Z_\prec$ (in $\widehat{\mathbf T}$) if and only if (up to exchanging $u$ and $v$) we have  $u'=u, v'=p_1(u)$, and
		$v'\preceq_{\widehat Y}v$, that is, if and only if~$u\prec_Y
		v$ (in $\mathbf T$). 
		
		Hence, $\widehat{\mathbf T}$ is a twin-model of $\mathbf T$.
		\vspace{-4mm}
	\end{claimproof}
	
	\medskip
	Let $n=|L(Y)|$.  The next claim shows that we have much freedom in
	defining a ranking for
	$\widehat{\mathbf T}$.
	\begin{claim}
		If $\hat\tau:V(\widehat Y)\rightarrow[2n-1]$ satisfy the labeling
		and monotonicity conditions, then~$\hat\tau$ is a ranking of\,
		$\widehat{\mathbf T}$.
	\end{claim}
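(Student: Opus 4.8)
The task, given a map $\hat\tau\colon V(\widehat Y)\to[2n-1]$ that already satisfies the labeling and monotonicity conditions, is to verify the only remaining requirement for a ranking, namely synchronicity: for every pair in $\widehat Z_\prec\cup\bigcup_i Z_{R_i}$ the maximum $\hat\tau$-value of the two parents is strictly below the minimum $\hat\tau$-value of the two endpoints. The plan is to exploit the shape of $\widehat Y$: replacing each internal node of $Y$ by a cherry turns \emph{every} vertex of $Y$ into a leaf of $\widehat Y$, and the labeling condition forces all leaves of $\widehat Y$ to carry the top label $2n-1$ while every internal node of $\widehat Y$ carries a label in $[2n-2]$. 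So synchronicity will be essentially automatic.

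I would first dispatch the pairs $(u,v)\in Z_{R_i}$. Here $u,v\in V(Y)=L(\widehat Y)$, so $\hat\tau(u)=\hat\tau(v)=2n-1$, whereas $\pi_{\widehat Y}(u),\pi_{\widehat Y}(v)\in I(\widehat Y)$, so their $\hat\tau$-values are at most $2n-2$. Hence
\[
\max\bigl(\hat\tau(\pi_{\widehat Y}(u)),\hat\tau(\pi_{\widehat Y}(v))\bigr)\ \le\ 2n-2\ <\ 2n-1\ =\ \min\bigl(\hat\tau(u),\hat\tau(v)\bigr),
\]
and synchronicity holds using only the labeling condition. Next I would treat a pair $(v,p_1(v))\in\widehat Z_\prec$ with $v\in I(Y)$. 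Now $v\in L(\widehat Y)$ gives $\hat\tau(v)=2n-1$, while $p_1(v)\in I_1\subseteq I(\widehat Y)$ gives $\hat\tau(p_1(v))\le 2n-2$, so $\min(\hat\tau(v),\hat\tau(p_1(v)))=\hat\tau(p_1(v))$. From the parent function one reads $\pi_{\widehat Y}(v)=p_0(v)$ (the $x\in I(Y)$ clause) and $\pi_{\widehat Y}(p_1(v))=p_0(v)$ (the $x\in I_1$ clause), so the left-hand side of the synchronicity inequality is exactly $\hat\tau(p_0(v))$. Since $p_0(v)=\pi_{\widehat Y}(p_1(v))\prec_{\widehat Y}p_1(v)$, monotonicity yields $\hat\tau(p_0(v))<\hat\tau(p_1(v))$, which is precisely synchronicity for this pair. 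These two cases exhaust $\widehat Z_\prec\cup\bigcup_i Z_{R_i}$, so $\hat\tau$ is a ranking.

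I do not expect a genuine obstacle here; the argument is a matter of reading off $\pi_{\widehat Y}$ on the cherry vertices $p_0(v),p_1(v)$ and observing that the ``content'' vertices of the twin-model all sit at leaves of $\widehat Y$ carrying the maximal label, which makes synchronicity vacuous except inside a single cherry, where monotonicity closes the gap. The point of isolating this freedom is that one is then unconstrained in the choice of $\hat\tau$ (beyond labeling and monotonicity), which is what will let one pick an interleaving of the new cherry labels with the ranking $\tau$ of $\mathfrak T$ that keeps the width controlled, and thereby bound $\tww(\mathbf T)$ by twice $\mathrm{width}(\mathfrak T)$.
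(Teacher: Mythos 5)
Your proof is correct and follows essentially the same route as the paper's: for pairs in $Z_{R_i}$ synchronicity is immediate because both endpoints lie in $L(\widehat Y)$ and carry the top label $2n-1$, and for pairs in $\widehat Z_\prec$ the two endpoints share the common parent $p_0(v)$, so synchronicity reduces to monotonicity. You spell out the parent function a bit more explicitly than the paper does, but the argument is the same.
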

	\begin{claimproof}
		Assume $(u,v)\in \widehat Z_\prec$. Then
		$\pi_{\widehat Y}(u)=\pi_{\widehat Y}(v)$ hence the synchronicity
		for~$\widehat Z_\prec$ follows from monotonicity. Assume
		$(u,v)\in Z_{R_i}$. Then $\hat\tau(u)=\hat\tau(v)=2n-1$ hence the
		synchronicity obviously holds.
	\end{claimproof}
	
	\smallskip
	We now define $\hat\tau:V(\widehat Y)\rightarrow [2n-1]$ as follows:
	order the vertices $v\in I_1$ by increasing $\tau\circ
	p_1^{-1}(v)$. For each $v\in I_1$, insert the children of $v$ in
	$I_0$ just after $v$, then add $r(\widehat Y)$ in the very
	beginning. Numbering the vertices of $I_0\cup I_1$ according to this
	order defines $\hat\tau$ on~$I(\widehat Y)$. We extend this function
	to the whole $V(\widehat Y)$ by defining $\hat\tau(v)=2n-1$ for all
	$v\in L(\widehat Y)$.  By construction, the labeling and
	monotonicity properties hold hence $\hat\tau$ is a ranking of
	$\widehat{\mathbf T}$.
	
	Consider a time $1<\hat t<2n-1$ and let $v$ be the vertex with
	$\hat\tau(v)=\hat t$. 
	
		Assume for contradiction that some edge $(x,y)$ belongs to $Z_\prec^\ast$. Then, up to exchanging~$x$ and $y$, $x$ is in $I$ (hence a leaf of $\widehat Y$) and  $p(x)$ is a descendant of $y$ (i.e. $p(x)\succ_{\widehat Y} y$). Thus, $y\preceq_{\widehat Y}  \pi_{\widehat Y}(p(x))=\pi_{\widehat Y}(x)$, contradicting the necessary condition that $x$ and $y$ are non-comparable in $\preceq_{\widehat Y}$. Thus,
	the degree for $Z_\prec^\ast$ is null. 
	
	If $v\in I_1$ we define
	$t=\tau\circ p_1^{-1}(v)$. Then
	$\partial_{\hat t}\widehat Y=p_1(\partial_t Y)$ and the degree for
	$Z_{R_i}^\ast$ in the layer of $\widehat Y$ at time~$\hat t$ is at
	most the degree for $Z_{R_i}^\ast$ in the layer of $Y$ at time $t$.
	
	If $v\in I_0$ we define
	$t=\tau\circ \pi_Y\circ p_0^{-1}(v)$.  Then
	$\partial_{\hat t}\widehat Y$ is $p_1(\partial_t Y)$ in which we
	remove the parent of $v$ and add $v$ and (maybe) the sibling of
	$v$. Compared to the layer at time
	$\hat\tau\circ \pi_1(\pi_Y\circ p_0^{-1}(v))$, the red degree can
	only increase because some relations $Z_{R_i}^\ast$ from a vertex $u$ are
	incident to $v$ and its sibling. It follows that the maximum
	$Z_{R_i}^\ast$ is at most doubled.
\end{proof}

From what precedes, we deduce the following theorem, which may be of independent interest.

\begin{thm}
	\label{thm:ftm}
	Let $\Sigma$ be a binary signature.
	Every binary $\Sigma$-structure with twin-width $t$ has a full twin-width model $\mathbf T=(V(Y),\prec_Y, Z_{R_1},\dots,Z_{R_k})$
	with twin-width at most $2t$, associated to a 
	ranked twin-width model $\mathfrak T=(V(Y),\prec_Y,Z_{R_1},\dots,Z_{R_k},\tau)$ with width $t$ and $d+2$-degenerate Gaifman graph.
\end{thm}
\begin{proof}
	The existence of a ranked twin-width model $\mathfrak T$ with width $t$ follows from \Cref{lem:twwtm}. According to \Cref{lem:degTM},  $\mathfrak T$ has a $d+2$-degenerate Gaifman graph and, according to \Cref{lem:twwftm}, the associated full twin-width model $\mathbf T$ has twin-width at most $2t$.
\end{proof}

\section{Ordered twin-models}

Recall that a \emph{preordering} of the vertices of a rooted tree is the discovery order of the vertices of a (depth-first search) traversal of the tree starting at its root.

	\begin{lem}
	\label{lem:otm}
	Let $\mathscr O$ be the class of binary tree orders and let
	$\mathscr Y^<$ be the class of rooted binary trees, with vertices ordered by some
	preordering. Then there exist simple  transductions~$\mathsf L$ and~$\mathsf O$ such that $(\mathsf L,\mathsf O)$ is a transduction
	pairing of $\mathscr O$ and $\mathscr Y^<$.
	\end{lem}

\begin{proof}
	We define two simple transductions. The first transduction maps
	binary tree orders~$\prec$ into the preorder defined by some
	traversal of $Y$.
	
	$\mathsf{L}$ is defined as follows: we consider a mark $M$ on the
	vertices and define
	\vspace{-1mm}
	\begin{align*}	
		\rho_E(x,y)&:=\bigl((x\prec y)\wedge\ \forall v\ \neg((x\prec v)\wedge (v\prec y))\bigr)\vee 
		\bigl((y\prec x)\wedge\ \forall v\ \neg((y\prec v)\wedge (v\prec x))\bigr)\\
		\rho_<(x,y)&:=(x\prec y)\vee\ \neg(y\preceq x)\wedge
		\exists u\,\exists v\,\exists w\, \Bigl(\forall z
		\bigl((w\prec z)\rightarrow \neg((z\prec u)\wedge (z\prec v))\bigr)\wedge \\[-1mm]
		& \hspace{9mm}
		(u\preceq x)\wedge (v\preceq y)\wedge (w\prec u)\wedge (w\prec v)\wedge \rho_E(u,w)\wedge \rho_E(v,w)\wedge M(u)\Bigr). 
	\end{align*} 
	\vspace{-4mm}
	
	Consider a binary tree order $(V(Y),\prec)\in\mathscr O$, and let $Y$
	be the rooted binary tree defined by $\prec$.  Recall that the
	\emph{preordering} of $Y$ defined by some plane embedding of $Y$ (that is
	to an ordering, for each node $v$, of the children of $v$) is a linear
	order on $V(Y)$ such that for every internal node $v$ of $Y$, one
	finds in the ordering the vertex~$v$, then the first child of~$v$
	and its descendants, then the second child of $v$ and its
	descendants.  Let $<$ be the preordering defined by some plane embedding
	of $Y$.  (Note, that, despite its name, this is a total order.)
	 Let us mark by $M$ all the nodes of $Y$ that are the first
	child of their parent.  The formula $\rho_E$ defines the cover
	graph of $\prec$, thus~$E$ is the adjacency relation of $Y$. Let $x,y$
	be nodes of $Y$. If $x=y$, then $\rho_<(x,y)$ does not hold. If~$x$ and
	$y$ are comparable in $\prec$, then $\rho_<(x,y)$ is equivalent to
	$x\prec y$. Otherwise, let~$w$ be the infimum of $x$ and $y$, and let
	$u$ and $v$ be the children of~$w$ such that $u\prec x$ and
	$v\prec y$. Then~$\rho_<(x,y)$ holds if $u$ is the first child of
	$w$, that is, if $u$ is marked. Altogether, we have~$\rho_<(x,y)$ if
	and only if $x<y$. Hence, $(V(Y),\prec)\in\mathsf L(Y^<)$, where $Y^<$
	is~$Y$ ordered by $<$.
	
	The transduction $\mathsf O$ is defined as follows:
	\[
	\rho_\prec(x,y) :=(x<y)\wedge \forall z\,\forall w\ 
	((x< z)\wedge (z\leq y)\wedge E(z,w))
	\rightarrow (x\leq w).
	\]
	
	Let $Y^<$ be a rooted binary tree $Y$ with preorder $<$ and let
	$\prec$ be the corresponding tree order. If $x\geq y$, then
	$\rho_\prec(x,y)$ does not hold, so we assume $x<y$.  Assume $x$ is an
	ancestor of $y$ in~$Y$, then all the vertices $z$ between $x$ and $y$
	in the preorder are descendants of $x$ thus any neighbor of these are
	either descendants of $x$ or $x$ itself thus $\rho_\prec(x,y)$
	holds. Otherwise, let~$w$ be the infimum of $x$ and $y$ in $Y$ and let
	$z$ be the child of $w$ that is an ancestor of $y$. Then $z$ is
	between $x$ and $y$ is the preorder, $z$ is adjacent to $w$, but $w$
	appears before $x$ in the preorder. Thus, $\rho_\prec(x,y)$ does not
	hold.  It follows that $\rho_\prec(x,y)$ is equivalent to $x\prec y$
	thus $Y^<\in\mathsf O(V(Y),\prec)$.  Thus, $(\mathsf L,\mathsf O)$ is a
	transduction pairing of $\mathscr O$ and $\mathscr Y^<$.
\end{proof}

\begin{defi}[The classes $\mathscr O_0,\mathscr Y_0^<$, and $\mathscr T^<$]
	\label{def:clOY}
	The class~$\mathscr O_0$ is the reduct of the class $\mathscr F$, obtained by keeping only the tree order relation; the class $\mathscr Y_0^<$ is the class of all rooted binary trees corresponding to the tree orders in $\mathscr O_0$ ordered by some preordering, so that $(\mathsf L,\mathsf O)$ is a transduction
	pairing of $\mathscr O_0$ and $\mathscr Y_0^<$.
	
	The class $\mathscr T^<$ is the class of ordered
	twin-models obtained from the twin-models in $\mathscr T$ by adding
	a linear order defined by some preordering of the rooted tree of the tree
	model.
\end{defi}
Note that
the natural generalization $(\widehat{\mathsf L},\widehat{\mathsf O})$ of 
$(\mathsf L,\mathsf O)$ is a transduction pairing of $\mathscr F$ and~$\mathscr T^<$.

\begin{defi}[The class $\mathscr G$]
	\label{def:clG}
	The class $\mathscr G$ is the class of the Gaifman graphs of the structures in $\mathscr T$.
\end{defi}

\begin{lem}
	\label{lem:toGaifman}
	Let $\mathsf G=\mathsf{Gaifman}_\Sigma$. 
	There exists a
	simple transduction $\mathsf{U}$ such that
	$(\mathsf{G},\mathsf{U})$ 
	is a transduction pairing of $\mathscr T$ and~$\mathscr G$.
\end{lem}
\begin{proof}
	According to \Cref{lem:twwftm} the class $\mathscr F$
	has bounded twin-width. 
	According to \Cref{thm:trans}, as~$\mathscr T^<$ is an $\widehat{\mathsf{L}}$-trans\-duction
	of~$\mathscr F$ it has bounded twin-width. Thus, the class~$\mathscr T$, being a reduct of $\mathscr T^<$, has bounded
	twin-width. It follows from \Cref{lem:degTM} that the class~$\mathsf{G}(\mathscr T)$ is degenerate, hence, according to
	\Cref{thm:sptww}, it has bounded expansion and, according to
	\Cref{thm:BEchist}, bounded star chromatic number (at most $c$).  It follows
	from \Cref{lem:Gaifman} that, defining $\mathsf U=\mathsf{Unfold}_{\Sigma,c}$,
	$(\mathsf{G},\mathsf{U})$ 
	is a transduction pairing of $\mathscr T$ and \mbox{$\mathsf{G}(\mathscr T)=\mathscr G$}.
\end{proof}

\begin{defi}[The class $\mathscr G^<$]
	\label{def:clGo}
	The class $\mathscr G^<$ is the class of ordered graphs obtained from the structures in $\mathscr T^<$ by applying the natural generalization $\widehat{\mathsf G}$ of the interpretation $\mathsf{Gaifman}_\Sigma$.
\end{defi}
Note that, denoting by $\widehat{\mathsf U}$ the natural generalization of the transduction $\mathsf U$,  it follows from \Cref{lem:toGaifman} that 
$(\widehat{\mathsf G},\widehat{\mathsf U})$ is a transduction pairing of $\mathscr T^<$ and $\mathscr G^<$.

\section{Permutations and the main result}
\label{sec:perm}

When we speak about transductions of permutations, we consider the
permutations as defined in \Cref{sec:rel}. Hence, the language used to
define the transduction can use the binary relations $<_1,<_2$, as
well as equality. 
\begin{lem}
	\label{cl:sparse2permutation} Let $c\in\mathbb N$ and
	let~$\mathscr S^<$ be a class of ordered graphs, and let~$\mathscr S$ be the reduct of~$\mathscr S^<$ obtained by forgetting the linear order.
	
	Assume that the graphs in $\mathscr S$ have star chromatic number at most~$c$. 
	Then, there exist a copying transduction~$\mathsf T_1$ with  \mbox{$\bl(\mathsf T_1)=c+1$}, a simple transduction~$\mathsf T_2$, and a class
	$\mathscr P$ of  permutations such that 
	$(\mathsf T_1,\mathsf T_2)$ is a transduction pairing of~$\mathscr S^<$ and~$\mathscr P$.
\end{lem}

We give here an informal description of the transductions~$\mathsf T_1$ and $\mathsf T_2$ and refer to \Cref{fig:Ex-perm} for an example:
The transduction $\mathsf T_1$ is used to compute a permutation $\sigma$ from an ordered  graph $G^<$ and works as follows: we first compute a star coloring $\gamma$ of $G$ with $c$ colors and orient edges so that bicolored stars are oriented from their centers. Then we blow each vertex into $(u,1),\dots,(u,c+1)$. From this we keep only the vertices of the form $(v,c+1)$ or of the form $(v,i)$ if $v$ has an in-neighbor colored $i$. The linear order $<_1$ orders pairs~$(u,i)$ by first coordinate first (using $<$) then by increasing $i$. The linear order $<_2$ is a succession of intervals ending with a vertex of the form $(v,c+1)$ (these intervals being ordered according to the order on $v$); the interval ending with $(v,c+1)$ contains the pairs $(u,\gamma(v))$, for all the out-neighbors~$u$ of $v$, ordered by first coordinate.
The transduction $\mathsf T_2$ is used to compute an ordered graph $G^<$ from a permutation $\sigma$. It works as follows. First, we mark some elements. These elements will correspond to the vertices of~$G^<$, and the linear order $<$ will be the restriction of $<_1$ to these elements. Each vertex $v$ of~$G^<$ defines a maximal interval~$A(v)$ in~$<_1$ ending with~$v$ and containing no other marked element, and a maximal interval $B(v)$ in~$<_2$ ending with $v$ and containing no other marked element.
In $G^<$, a vertex $u$ is adjacent to a vertex $v\neq u$ if $A(u)$ intersects $B(v)$ or $A(v)$ intersects $B(u)$.

\begin{figure*}[h!t]
	\centering
	\includegraphics[width=.95\textwidth]{}	
	
	\medskip
	Thus, the permutation obtained for this example is
	\vspace{-3mm}
	
	\setcounter{MaxMatrixCols}{33}
	\[\sigma=\begin{pNiceMatrix}[small,columns-width = 4pt]
		1&2&3&4&5&6&7&8&9&10&11&12&13&14&15&16&17&18&19&20&21&22&23&24&25&26&27&28&29&30&31&32&33\\
		3&2&6&5&9&25&8&31&11&7&10&14&29&12&4&15&1&18&17&20&16&19&23&21&26&13&22&24&32&28&27&30&33\\	
	\end{pNiceMatrix}\]
	\medskip
	
	\includegraphics[width=.95\textwidth]{}	
	
	\caption{In the top, the transduction $\mathsf T_1$ (we assume $c=5$). 
		In the bottom, the transduction $\mathsf T_2$. In gray, the marked elements of $\sigma$, which are the vertices of $G^<$. In $G^<$,
		an edge links $15$ ($g$ in the top) and $28$ ($v_6$ in the top) as $A(15)$ intersects $B(28)$.
		(The linear order $<$ is the left-hand traversal preorder of the rooted tree.)
	}
	\label{fig:Ex-perm}
\end{figure*}

\begin{proof}
	Let $G\in\mathscr S$, let $\gamma:V(G)\rightarrow[c]$ be a star coloring of $G$, and let $\vec G$ be an orientation of $G$ obtained by orienting all bicolored stars from their roots.
	We mark a vertex $v\in V(G)$ by $M_i$ if $\gamma(v)=i$ and, for $I\subseteq [c]$, by 
	$N_I$ if $I$ is the set of the $\gamma$-colors of the in-neighbors of $v$.
	Let $\mathsf C_{c+1}$ be the $(c+1)$-blowing transduction. The vertices of $\mathsf C_{c+1}(G)$ are the pairs $(u,i)\in V(G)\times[c+1]$, and there are new predicates $P_j$ (with $j\in[c+1]$), where $P_j(x)$ holds if $x$ is of the form $(u,j)$, for some $u\in V(G)$.
	We define $\rho(x):=P_{c+1}(x)\vee \bigvee_{I\subseteq[c]}\bigvee_{i\in I}(N_I(x)\wedge P_i(x))$. (Note that if $x$ is $(u,j)$ then $N_I(x)$ is $N_I(u)$.)
	Hence $W=\rho(\mathsf C_{c+1}(G))$ is  the union of  $V(G)\times\{c+1\}$ and the set of all pairs $(u,i)\in V(G)\times [c]$ such that $u$ has an in-neighbor in $\vec G$ with color $i$. 
	We consider the subgraph $H$ of $\mathsf C_{c+1}(G)$ induced by $W$.
	For $x\in V(H)$ we define $f_{c+1}(x)$ as $x$ if $P_{c+1}(x)$ or as the (only) $\sim$-neighbor $y$ of $x$ with
	$P_{c+1}(y)$, that is:
	\[
	f_{c+1}(x)=y\quad\iff\quad P_{c+1}(y)\wedge((x=y)\vee (x\sim y)).
	\]
	If $x$ is of the form $(u,i)$ then $f_{c+1}(x)$ is $(u,c+1)$. 
	Then, for $i\in I\subseteq [c]$ and whenever $N_I(x)$ holds, we define $f_i(x)$ as the (only) neighbor $y$ of $f_{c+1}(x)$  in $P_{c+1}$ and  $M_i$. Hence,
	\[
	f_i(x)=y\quad\iff\quad E(f_{c+1}(x),y)\wedge P_{c+1}(y)\wedge M_i(y).
	\]
	We now define the linear orders $<_1$ and $<_2$ as follows (See \Cref{fig:Ex-perm} for an illustration.)
	\begin{align*}
	x<_1y\qquad&\iff\qquad
	\begin{cases}
		\text{either } f_{c+1}(x)<f_{c+1}(y),\\
		\text{or }f_{c+1}(x)=f_{c+1}(y), P_i(x), P_j(y),\text{ and }i<j.
	\end{cases}\\
	x<_2 y\qquad&\iff\qquad
	P_i(x), P_j(y),\text{ and }
	\begin{cases}
\text{either }f_i(x)<f_j(y),\\ 
\text{or }f_i(x)=f_j(y) \text{ and } f_{c+1}(x)<f_{c+1}(y).
\end{cases}
\end{align*}
(Note that, in the last condition, $f_i(x)=f_j(y)$ implies  $i=j$.)

	We call the obtained permutation $\sigma(G^<)$. 
	Note that this permutation depends on some arbitrary choices of star coloring. 
	We further define $\mathscr P=\{\sigma(G^<): G^<\in\mathscr S^<\}$.
	The transduction~$\mathsf T_1$ is defined as the composition of $\mathsf C_{c+1}$, the interpretation reducing the domain to the vertices satisfying~$\rho(x)$, then the interpretation defining $<_1$, $<_2$ and forgetting all the other relations.  Hence $\sigma(G^<)\in\mathsf T_1(G^<)$.

	The definition of $\mathsf T_2$ is as follows: we consider a predicate~$M$
	in such a way that the maximum element of $<_1$
	is in $M$. The domain $V$ of $\mathsf T_2(\sigma)$ is $M(\sigma)$.
	The linear order~$<$ is the restriction of $<_1$ to
	$V$. Then, $x$ is adjacent to $y$ if there exists $z\notin V$ and $(i,j)\in\{(1,2),(2,1)\}$ with $z<_i x$, $z<_j y$, and no vertex in~$V$ is between~$z$ and $x$ in $<_i$ and no vertex in $V$ between~$z$ and $y$ in $<_j$. It is easily checked that for every $G^<\in\mathscr S^<$ we have $G^<\in\mathsf T_2(\sigma(G))$
	(see \Cref{fig:Ex-perm}).
	According to \Cref{lem:pairing} it follows that $(\mathsf T_1,\mathsf T_2)$ is a transduction pairing of~$(\mathscr S^<,\mathscr P)$.
\end{proof}

\begin{thm}
	\label{thm:main}
	For every class $\mathscr C_0$ of binary structures with
	twin-width at most $t$ there exists a proper permutation class 
	$\overline{\mathscr P}$, an integer $k$, and a transduction $\mathsf T$, such that 
	$\mathscr C_0$ is a $k$-bounded $\mathsf T$-transduction of $\overline{\mathscr P}$.
	Precisely,   for every graph
	$G\in\mathscr C_0$ there is a permutation $\sigma\in\overline{\mathscr P}$ on at most 
	$k|G|$ elements with $G\in\mathsf T(\sigma)$. 
\end{thm}
\begin{proof}
	Let $\mathscr C_0$ be a class of binary structures with
	twin-width at most $t$. Let $\mathscr T$ be a class of twin-models
	obtained by optimal contraction sequences of graphs in $\mathscr C_0$,
	and let $\mathscr F$ be the class of the corresponding full
	twin-models. According to \Cref{lem:twwftm} $\mathscr F$ has
	twin-width at most~$2t$, moreover, applying the transduction
	$\mathsf L$ on $\prec$ we transform $\mathscr F$ into the 
	class~$\mathscr T^<$, whose reduct is $\mathscr T$ (see \Cref{lem:otm}). 
	Let $\mathscr G^<$ be
	the class obtained from $\mathscr T^<$ be taking the Gaifman graphs
	of the relations distinct from the linear order, and keeping the
	linear order, and let $\mathscr G$ be the reduct of $\mathscr G^<$
	obtained by forgetting the linear order. Thus
	$\mathscr G=\mathsf{Gaifman}(\mathscr T)$. As the classes
	$\mathscr G^<$ and its reduct $\mathscr G$ are transductions of the
	class $\mathscr F$ they have bounded twin-width, by~\Cref{thm:trans}. Moreover, the class
	$\mathscr G$ is degenerate hence it has bounded expansion and, in
	particular, bounded star chromatic number.
	 It follows that we have a transduction pairing of 
	$\mathscr G^<$ and a class $\mathscr P$ of permutations.
	As the class of all finite graphs has unbounded twin-width and is a transduction of the class of all permutations, the class of all permutations has unbounded twin-width. Thus, as $\mathscr P$ has bounded twin-width, it is a proper class of permutations.
	From the transduction pairing of $\mathscr T^<$ and $\mathscr G^<$ and the one of~$\mathscr F$ and $\mathscr T^<$ we deduce that there is a transduction pairing of~$\mathscr F$ and~$\mathscr P$. As $\mathscr C_0$ is a transduction of~$\mathscr F$ we conclude
	that~$\mathscr C_0$ is a transduction of $\mathscr P$.
	
	Note that the class $\mathscr C_0$ is obviously also a $\mathsf T$-transduction of the permutation class $\overline{\mathscr P}$ obtained by closing $\mathscr P$ under sub-permutations. 
\end{proof}

\begin{cor} \label{cor:small}
	Every class of graphs with bounded
	twin-width contains at most $c^n$ non-isomorphic graphs on $n$
	vertices (for some constant $c$ depending on the class).
\end{cor} 
\begin{proof}
	Let $\mathscr C_0$ be a class with bounded twin-width. As twin-width is monotone with respect to induced subgraph inclusion, we may assume that $\mathscr C_0$ is hereditary.
	According to \Cref{thm:main}, there exists a proper permutation class 
	$\overline{\mathscr P}$, an integer~$k$, and a transduction $\mathsf T$, such that for every $G\in\mathscr C_0$ there is a permutation $\sigma\in\overline{\mathscr P}$ on $k|G|$ elements with $G\in\mathsf T(G)$. Let $m$ be the number of unary predicates used by the transduction.
	According to the Marcus-Tardos theorem~\cite{MarcusT04}, for every proper permutation $\mathscr P$ there exists a constant $a$ such that~$\mathscr P$ contains at most $a^n$ permutations on $n$ elements. For each permutation on $n$ elements, there are $2^{mn}$ possible choices for the interpretation of the $m$ predicates (as each predicate defines a subset of elements). 
	It follows that $\mathscr C_0$ contains
	at most $\sum_{i=1}^{kn}a^i\,2^{mi}=O((a^k2^{mk})^n)$ non-isomorphic graphs with at most $n$ vertices. Thus, there exists a constant~$c$ such that $\mathscr C_0$ contains at most~$c^n$ non-isomorphic graphs with $n$ vertices.
\end{proof}

\vspace{-3mm}
\section{Further remarks}
It was proved by Bonnet \emph{et al.}~\cite{twin-width4} that a class of graphs $\mathscr C$ has bounded twin-width if and only if it is the reduct of a monadically dependent class of ordered graphs. This implies  the following duality type statement for every class $\mathscr C^<$ of ordered graphs:

\vspace{-5mm}
\begin{align*}
	\exists \text{ permutation }\sigma & \text{ with }  \xy\xymatrix{{\rm Av}(\sigma)\ar@{->>}[r]&\mathscr C^<}\endxy\\[-1mm]
	& \hspace{-1mm}\iff\hspace{2.5mm}\xy\xymatrix{\mathscr C^<\ar@{->>}|{/}[r]& \mathscr U}\endxy,
\end{align*}
where ${\rm Av}(\sigma)$ denotes the class of all permutations avoiding the pattern $\sigma$, $\mathscr U$ denotes the class of all graphs,  $\xy\xymatrix{\ar@{->>}[r]&}\endxy$  means the existence of a transduction, and $\xy\xymatrix{\ar@{->>}|{/}[r]&}\endxy$  means the non-existence of a transduction.

Interestingly, transductions relate some classical classes of graphs to well studied permutation classes.
\begin{prop}
	\label{prop:ex}
A class $\mathscr C$ of graphs has
\begin{itemize}
	\item bounded linear clique-width if and only if it is a transduction of the class ${\rm Av}(21)$ of identities;
	\item bounded clique-width if and only if it is a transduction of 
	the class ${\rm Av}(231)$ of stack-sortable permutations (or, equivalently, if and only if it is a transduction of the class ${\rm Av}(2413,3142)$ of separable permutations).
\end{itemize}
\end{prop}
\begin{proof}
	It is proved in \cite{Colcombet07} that a class of graphs $\mathscr C$ has bounded linear clique-width if and only if it is a transduction of the class of  linear orders, that is: if and only if it is a transduction of the class of identities.
	
	It is also proved in \cite{Colcombet07} that a class of graphs $\mathscr C$ has bounded clique-width if and only if it is a transduction of the class of (binary) tree orders.
	The first equivalence of the second item follows from the next claim.
	\begin{claim}
		Binary tree-orders are transduction equivalent to stack-sortable (i.e.\ $231$-avoiding) permutations.
	\end{claim}
	\begin{claimproof}
		From binary tree-order to $231$-avoiding permutations.
		We let $<_1$ to be the lexicographic order (using two marks) and we define $<_2$
		as follows $x<_2 y$ if either $x$ is on the left side at $x\wedge y$ or $x<y$  and the successor
		$x'$ of $x$ with $x'\leq y$ is on the left side, or $x>y$ and the successor $y'$ of $y$ with $y'\leq x$ is on the right side.
		
		Conversely, from a $231$-avoiding permutation $(X,<_1,<_2)$ we define the tree order by $u\wedge y$ is the $<_1$-maximum element $x$ such that $x<_1 u, x<_1 v$, and $x$ is between $u$ and $v$ in~$<_2$.
	\end{claimproof}
Let us sketch  the second equivalence (with separable permutations):
One direction is obvious, as stack-sortable permutations are separable. For the other direction, it is easily shown that separable permutations are  transductions of the tree order defined by their separation tree. 
\end{proof}

Proposition~\ref{prop:ex} shows that two permutations classes (like stack-sortable and separable permutations) may be transduction equivalent but not Wilf-equivalent. To the opposite, it is not difficult to prove that, though they are Wilf-equivalent,  the class of $321$-avoiding permutations is not a transduction of the class of stack-sortable avoiding permutations. (This follows from the fact that  the class of grids is a transduction of ${\rm Av}(321)$ but has unbounded clique-width.)

The connection between classes of ordered graphs and permutation classes might well be even deeper than what is proved in this paper. 

\begin{conj}
	Every hereditary class $\mathscr C^<$ of ordered graphs is transduction equivalent to a permutation class.
\end{conj}

This conjecture is known to hold  if the class $\mathscr C^<$ is not monadically dependent, as it is then transduction equivalent to the class of all permutations \cite{twin-width4}; it also holds if the reduct~$\mathscr C$ of~$\mathscr C^<$ is biclique-free, as either $\mathscr C^<$ is not monadically dependent (previous item), or it has bounded twin-width \cite{twin-width4}. Then, since biclique-free classes of bounded twin-width have bounded expansion~\cite{twin-width2}, and according to \Cref{thm:BEchist} and \Cref{cl:sparse2permutation} the class $\mathscr C^<$ is transduction equivalent to a permutation class; finally, it also holds if the reduct~$\mathscr C$ of $\mathscr C^<$  is a transduction of a class with bounded expansion as $\mathscr C$ is then transduction equivalent to a bounded expansion class $\mathscr D$ \cite{SBE_TOCL} and this transduction equivalence can be extended to a transduction equivalence of $\mathscr C^<$ and an expansion $\mathscr D^<$ of $\mathscr D$, which falls in the previous case.

\section*{Acknowledgments}
The authors wish to express their gratitude to the referees of a first version of this paper, for their careful reading and their most valuable suggestions.

\bibliographystyle{alphaurl}
\bibliography{lmcs}

\end{document}